\newcommand\arraybslash{\let\\\@arraycr}
\newcolumntype{+}{>{\global\let\currentrowstyle\relax}}
\newcolumntype{^}{>{\currentrowstyle}}
\newlength{\bracewidth}
\def\fudge{\mathchoice{}{}{\mkern.5mu}{\mkern.8mu}}
\def\bbc#1#2{{\rm \mkern#2mu\vbar\mkern-#2mu#1}}
\def\bbb#1{{\rm I\mkern-3.5mu #1}}
\def\bba#1#2{{\rm #1\mkern-#2mu\fudge #1}}
\def\bb#1{{\count4=`#1 \advance\count4by-64 \ifcase\count4\or\bba A{11.5}\or
   \bbb B\or\bbc C{5}\or\bbb D\or\bbb E\or\bbb F \or\bbc G{5}\or\bbb H\or
   \bbb I\or\bbc J{3}\or\bbb K\or\bbb L \or\bbb M\or\bbb N\or\bbc O{5} \or
   \bbb P\or\bbc Q{5}\or\bbb R\or\bbc S{4.2}\or\bba T{10.5}\or\bbc U{5}\or
   \bba V{12}\or\bba W{16.5}\or\bba X{11}\or\bba Y{11.7}\or\bba Z{7.5}\fi}}
\newtheorem{theorem}{Theorem}[section]
\newtheorem{lemma}{Lemma}[section]
\title{On the Dynamics of Dengue Virus type 2 with Residence Times and Vertical Transmission }
\author{Derdei Bichara$^1$, Susan A. Holechek$^{1,2}$, Jorg\'e Vel\'azquez-Castro$^3$, \\
Anarina L. Murillo$^1$ and Carlos Castillo-Chavez$^1$\\
$^1$ Simon. A. Levin Mathematical, Computational and Modeling Sciences Center,\\
Arizona State University, P.O. Box 873901, Tempe, AZ 85287-3901; \\
 $^2$ Center for Infectious Diseases and Vaccinology, The Biodesign Institute, \\
Arizona State University, P.O. Box 875401, Tempe, AZ 85287-5401\\
$^3$ Facultad de Ciencias F\'\i sico Matem\'aticas, Universidad Aut\'onoma de Puebla, \\
Apartado Postal 1152, 72001, Puebla, Pue., Mexico
}
\date{December 31, 2015}
\begin{document}
\maketitle

\begin{abstract}
A two-patch mathematical model of Dengue virus type 2 (DENV-2) that  accounts for vectors' vertical transmission and  between patches human dispersal is introduced. Dispersal is modeled via a Lagrangian approach. A host-patch residence-times basic reproduction number is derived and conditions under which the disease dies out or persists are established. Analytical and numerical results highlight the role of hosts' dispersal in mitigating or exacerbating disease dynamics. The framework is used to explore  dengue dynamics using, as a starting point, the 2002 outbreak  in the state of Colima, Mexico.%
\end{abstract}

{\bf Mathematics Subject Classification:} 92C60, 92D30, 93B07.

\paragraph{\bf Keywords:}
Vector-borne diseases, DENV-2 Asian genotype, Dengue, Residence times, Multi-Patch model, Global Stability.

\section{Introduction}

Dengue, a re-emerging vector-borne disease, is caused by members of the genus \textit{Flavivirus} in the family \textit{Flaviviridae} with four active antigenically distinct serotypes, DENV-1, DENV-2, DENV-3, and DENV-4 \cite{deubel1988nucleotide}. The pathogenicity of dengue can range from asymptomatic, mild dengue fever (DF), to dengue hemorrhagic fever (DHF), and dengue shock syndrome (DSS) \cite{deubel1988nucleotide,halstead2002dengue}. Although infection with a dengue serotype does not usually protect against other serotypes, it is belief that secondary infections with a heterologous serotype increase the probability of DHF and DSS \cite{burke1988prospective,halstead1970observations}. According to the World Health Organization, 40\% of the global population is at risk for dengue infection with an estimate of 50 to 100 million infections yearly including 500,000 cases of DHF. It has been estimated that about 22,000 deaths, mostly children under 15 years of age, can be attributed to DHF \cite{WHO2009FactSheet}. In the United States, approximately 5\% or more of the Key West population in Florida was exposed to dengue during the 2009-2010 outbreak \cite{CDC2010} while the Hawaii Department of Health reported 190 cases during the 2015 outbreak on Oahu, the first outbreak since 2011. Since dengue is not endemic in Hawaii, health authorities have suggested that the recent outbreak may have been started by infected visitors  \cite{Hawaii2015}. Dengue is highly prevalent and endemic in Southeast Asia, which has experienced a 70\% increase in cases since 2004 \cite{kwok2010}; Mexico, also an endemic country,  reported during the 2002 outbreak over a million cases of DF and more than 17,000 cases of DHF\cite{guzman2003dengue,morens2008dengue}. Dengue is transmitted primarily by the vector \textit{Ae. aegypti}, which is now found in most countries in the tropics and sub-tropics \cite{harris2000clinical,reiter1997surveillance}. The secondary vector, \textit{Ae. albopictus}, has a range reaching farther north than \textit{Ae. aegypti} with eggs better adapted  to subfreezing temperatures \cite{hawley1987aedes,morens2008dengue}. Differences in susceptibility and transmission of dengue infection \cite{arunachalam2008natural,knox2003enhanced,tewari2004dengue} raise the possibility that some serotypes are either more successful at invading a host population, or  more pathogenic, or both\cite{kyle2008global}. DENV-2 is the most associated with dengue outbreaks involving DHF and DSS cases \cite{montoya-circulation,rico1997origins,sittisombut1997possible,zhang2006structure}, followed by DENV-1 and DENV-3 viruses \cite{balmaseda2006serotype,harris2000clinical,montoya-circulation}. While infection with any of the four dengue serotypes could lead to DHF, the rapid displacement of DENV-2 American  by DENV-2 Asian genotype has been linked to major outbreaks with DHF cases in Cuba, Jamaica, Venezuela, Colombia, Brazil, Peru and Mexico \cite{zhang2006structure,rico1997origins,sittisombut1997possible,montoya-circulation,lewis1993phylogenetic,rico1998molecular}. A possible mechanism involved in the dispersal and persistence of DENV-2 in nature is vertical transmission (transovarial transmission)  via \textit{Ae. aegypti}. Prior studies were unsuccessful in demonstrating vertical transmission via \textit{Ae. aegypti} \cite{rodhain1997mosquito}. However, the use of advances in molecular biology has shown that vertical transmission involving \textit{Ae. aegypti} and \textit{Ae. albopictus} is possible in captivity and in the wild \cite{arunachalam2008natural,bosio1992variation,cecilio2009natural,gunther2007evidence,rosen1983transovarial}. Thus, assessing transmission dynamics and pathogenicity between the DENV-2 American and Asian genotypes' differences is one of the priorities associated with the study of the epidemiology of dengue. In short, dengue has an increasing  recurrent  presence  putting a larger percentage of the  global population at  risk of dengue infection, a situation that has become the norm due to the growth of travel and tourism between endemic and non-endemic  regions. The aim of this work is to better understand the impact of human mobility on dengue disease transmission, its impact on dengue dynamics, and the use mobility based strategies, an standard control measures, in reducing the prevalence of dengue infections . \\

Mathematical models describing the dynamics of interaction between host and vector go back to Ross \cite{Ross1911}, Lotka \cite{Lotka1923} and MacDonald \cite{macdonald1952analysis};  first used to study vector-host dynamics in the context of Malaria \cite{brauer2011mathematical,gumel2006mathematical,shim2012differential}. Variations of such framework have been applied to dengue ( for a review see \cite{smith2012ross}). Further applications of modeling variations in the context of Malaria include,\cite{forouzannia2014mathematical,mckenzie2004role,ngwa2010mathematical,niger2008mathematical} and in the context of dengue \cite{cccsanchez2011,chowell2006climate,gumel2006mathematical,murillo2014vertical,nishiura2006mathematical}. \\

The potential role of vertical transmission in dengue endemic regions or in fluctuating environments has been explored in \cite{adams_how_2010,esteva2000influence,nishiura2006mathematical}.   The role in the displacement of DENV-2 American via DENV-2 Asian vertical transmission has also been addressed \cite{murillo2014vertical}. The role of host movement has also been explored in the context of dengue \cite{adams2009man} in a formulation that does not account for the the {\it effective population size}. In this paper, the role of vertical transmission and movement via residence times are explored via a two-patch model involving non-mobile vectors and mobile hosts. This paper is organized as follows: The derivation of the model is presented in Section \ref{sec:modelderiv}; Analytical results are collected in Section \ref{sec:analytical}; The results of numerical simulations are collected in Section \ref{sec:simula}; Section \ref{sec:casestudy} explores the possible role of movement on  joint dynamics of dengue in Colima and Manzanillo in the presence of host mobility;  Concluding remarks are collected in Section \ref{sec:conclusion}.

\section{Derivation of the model}
\label{sec:modelderiv}

A single patch model is derived and embedded into a two-patch model via a residence-times matrix in order to study the impact of host mobility on dengue disease dynamics. Conditions for dengue eradication and persistence in the population are computed.
\subsection{Single patch model}
\label{sec:singlepatch}
We consider a population of host composed of susceptible ($S_h$), exposed ($E_h$), infectious ($I_h$) and recovered $(R_h)$ individuals interacting with a vector population composed of susceptible ($S_v$), exposed ($E_v$) and infected ($I_v$) vectors. The dynamics of dengue follows an $SEIR$ structure for the host population and an $SEI$ type for the vector population.  The birth rate for the host population is $\mu_h$, assumed to be equal to the death rate, that is, hosts' demographic differentials are conveniently ignored, that is, the host population is assumed to be constant. Susceptible hosts are infected, by infectious mosquitoes, at the rate $a \beta_{vh}\frac{I_{v}}{N_{h}}$ where $a$ is the biting rate and $\beta_{vh}$  is the infectiousness of human to mosquitoes. The exposed population develops symptoms  becoming infectious at the rate $\nu_h$. Infectious individuals recover at the per-capita rate $\gamma$. Susceptible mosquitoes become infected, via interactions with infectious hosts, at the rate $a\beta_{hv}\frac{ I_{h}}{ N_{h}}$.  Recent studies place significant importance to the connection between DENV-2 and DHF cases \cite{chowell2007clinical,espinoza2005clinical,montoya-circulation,rico1997origins,sittisombut1997possible,zhang2006structure} and on DENV-2 vertical transmission \cite{martins2012occurrence}. Hence, it is assumed that a fraction $q$  of the mosquitoes are ``born" infected  entering directly the infectious class. The natural per-capita vector mortality is $\mu_v$. \\

The model describing the dynamics of DENV-2 is given by the following system of differential equations:
\begin{equation} \label{1Patch}
\left\{\begin{array}{llll}
\dot S_{h}=\mu_hN_{h}-a \beta_{vh}S_{h}\frac{I_{v}}{N_{h}}
-\mu_h S_{h}\\
\dot E_{h}=\beta_{vh}S_h\frac{I_{v}}{N_{h}}-(\mu_h+\nu_h) E_{h}\\
\dot I_{h}=\nu_hE_{h}-(\mu_h+\gamma_h) I_{h}\\
\dot R_{h}=\gamma_hI_h-\mu_h R_h\\
\dot S_{v}=\mu_v(N_v-qI_v)-a\beta_{hv}S_{v}\frac{ I_{h}}{ N_{h}}-\mu_vS_{v}\\
\dot{E}_{v}=a\beta_{hv}S_{v}\frac{ I_{h}}{ N_{h}}-(\nu_v+\mu_v)E_{v}\\
\dot{I}_{v}=\nu_v E_v+q\mu_vI_v-\mu_v I_v
\end{array}\right.
\end{equation}
In the absence of selection, that is, differences in birth and death rate and in the absence of vertical transmission, Model (\ref{1Patch}) turns out to be isomorphic to model considered by Chowell et \textit{al} in \cite{chowell_estimation_2007}. Model (\ref{1Patch}) is well defined supporting a sharp threshold property, namely, the disease dies out if the basic reproduction number $\mathcal R_0$ is less than unity, persisting whenever $\mathcal R_{0}>1$ where 

$$\mathcal R_0^2=\frac{a^2\beta_{hv}\beta_{vh}N_{v}\nu_h\nu_v}{(1-q)N_h(\mu_h+\nu_h)(\mu_h+\gamma_h)(\mu_v+\nu_v)\mu_v}.$$

\subsection{Heterogeneity through virtual dispersal}
\label{sec:twopatch}
The single patch model is the building  block for the two-patch model used in this study. Within each patch, in the absence of host mobility, dengue dynamics are modeled  via System \ref{1Patch}.   A metapopulation approach, an Eulerian perspective, is most often applied to the study of vector-borne diseases involving host mobility (\cite{adams2009man, auger2008ross,gao2012multipatch}). Here, a Lagrangian approach is used instead to model the movement of individuals between patches (see \cite{bichara2015vector,bichara2015sis}). It is assumed that vectors don't move between patches since vecors \textit{Ae. aegypti} and \textit{Ae. albopictus} do not travel more than few tens of meters over their lifetime \cite{adams2009man,WHODengueMosquitoes}; moving 400-600 meters at most \cite{bonnet1946dispersal,niebylski1994dispersal}, respectively. In short, we neglect vector's dispersal, which fits well the simulations involving two cities in the state of Colima, Mexico.\\

The host resident of Patch 1, population size $N_{h,1}$, spends, on average, $p_{11}$ proportion of its time in their own Patch 1 and $p_{12}$ proportion of its time visiting Patch 2. Residents of Patch 2, population of size $N_{h,2}$, spend $p_{22}$ proportion of their time in Patch 2 while spending $p_{21}=1-p_{22}$ visiting Patch 1. Thus, at time $t$, the \textit{effective population} in Patch 1 is $p_{11}N_{h,1}+p_{21}N_{h,2}$ and the \textit{effective population} in Patch 2 is $p_{12}N_{h,1}+p_{22}N_{h,2}$. The susceptible population of Patch 1 ($S_1$) could be infected by a vector in either Patch 1 ($I_{v,1}$) or  Patch 2 by ($I_{v,2}$). Thus, the dynamics of the susceptible population in Patch 1 are given by
\begin{equation}
\dot S_{h,1}=\mu_{h}N_{h,1}-a_1\beta_{vh}p_{11}S_{h,1}\frac{I_{v,1}}{p_{11}N_{h,1}+p_{21}N_{h,2}}-a_2\beta_{vh}p_{12}S_{h,1}\frac{I_{v,2}}{p_{12}N_{h,1}+p_{22}N_{h,2}}-\mu_{h} S_{h,1}.\end{equation} 
\\
And so, the  \textit{effective infectious} population in Patch 1 is $p_{11}I_{h,1}+p_{21}I_{h,2}$ and, consequently, the proportion of infectious individuals in Patch 1, is
$$\frac{p_{11}I_{h,1}+p_{21}I_{h,2}}{p_{11}N_{h,1}+p_{21}N_{h,2}}.$$

The dynamics of susceptible mosquitoes in Patch 1 are modeled as follows:

\begin{equation}
\dot{S}_{v,1}=\mu_{v}(N_{v,i}-qI_{v,i})-a_1\beta_{hv}S_{v,1}\frac{p_{11}I_{h,1}+p_{21}I_{h,2}}{p_{11}N_{h,1}+p_{21}N_{h,2}}-\mu_vS_{v,1}.\end{equation}
\\
The complete dynamics of DENV-2, with the host moving between patches, is given by the following system,
\begin{equation} \label{PatchGenConsR}
\left\{\begin{array}{llll}
\dot S_{h,i}=\mu_hN_{h,i}-\beta_{vh}S_{h,i}\sum_{j=1}^{2} a_jp_{ij}\frac{I_{v,j}}{p_{1j}N_{h,1}+p_{2j}N_{h,2}}-\mu_h S_{h,i},\\
\dot E_{h,i}=\beta_{vh}S_{h,i}\sum_{j=1}^{2} a_jp_{ij}\frac{I_{v,j}}{p_{1j}N_{h,1}+p_{2j}N_{h,2}}-(\mu_h+\nu_h) E_{h,i},\\
\dot I_{h,i}=\nu_h E_{h,i}-(\mu_h+\gamma_i) I_{h,i},\\
\dot R_{h,i}=\gamma_i I_{h,i}-\mu_hR_{h,i},\\
\dot S_{v,i}=\mu_{v}(N_{v,i}-qI_{v,i})-a_i\beta_{hv}S_{v,i}\frac{ \sum_{j=1}^{2}p_{ji}I_{h,j}}{ \sum_{k=1}^{2}p_{ki}N_{h,k}}-\mu_vS_{v,i},\\
\dot{E}_{v,i}=a_i\beta_{hv}S_{v,i}\frac{ \sum_{j=1}^{2}p_{ji}I_{h,j}}{ \sum_{k=1}^{2}p_{ki}N_{h,k}}-(\mu_v+\nu_v)E_{v,i},\\
\dot{I}_{v,i}=\nu_vE_{h,i}+q\mu_vI_{v,i}-\mu_vI_{v,i}, i=1,2.
\end{array}\right.
\end{equation}
\\
Since the total populations of  hosts and vectors are constant in each patch,
 System (\ref{PatchGenConsR}) has the same qualitative dynamics as,

\begin{equation} \label{PatchGenFinal}
\left\{\begin{array}{llll}
\dot S_{h,i}=\mu_hN_{h,i}-\beta_{vh}S_{h,i}\sum_{j=1}^{2} a_jp_{ij}\frac{I_{v,j}}{p_{1j}N_{h,1}+p_{2j}N_{h,2}}-\mu_h S_{h,i},\\
\dot E_{h,i}=\beta_{vh}S_{h,i}\sum_{j=1}^{2} a_jp_{ij}\frac{I_{v,j}}{p_{1j}N_{h,1}+p_{2j}N_{h,2}}-(\mu_h+\nu_h) E_{h,i},\\
\dot I_{h,i}=\nu_h E_{h,i}-(\mu_h+\gamma_i) I_{h,i},\\
\dot{E}_{v,i}=a_i\beta_{hv}(N_{v,i}-E_{v,i}-I_{v,i})\frac{ \sum_{j=1}^{2}p_{ji}I_{h,j}}{ \sum_{k=1}^{2}p_{ki}N_{h,k}}-(\mu_v+\nu_v)E_{v,i},\\
\dot{I}_{v,i}=\nu_vE_{h,i}-(1-q)\mu_vI_{v,i}.
\end{array}\right.
\end{equation}
\begin{table}[h!]
  \begin{center}
    \caption{Description of the parameters used in System (\ref{PatchGenFinal}).}
    \label{tab:Param}
    \begin{tabular}{cc}
      \toprule
      Parameters & Description \\
      \midrule
$\beta_{vh}$ & Infectiousness of human to mosquitoes\\
$\beta_{hv}$ & Infectiousness of mosquitoes to humans\\
$a_{i}$ &  Biting rate in Patch $i$ \\
$\mu_h$ & Humans' birth and death rate\\
$\nu_h$ & Humans' incubation rate\\
$\gamma_i$  & Recovery rate in Patch $i$\\
$p_{ij}$ &  Proportion of time residents of Patch $i$ spend in Patch $j$\\
$\mu_v$  & Vectors' natural birth and mortality rate\\
$\nu_v$ & Vectors' incubation rate\\
$q$  & Proportion of mosquitoes infected through transovarial transmission \\
      \bottomrule
    \end{tabular}
  \end{center}
\end{table}
The parameters of Model \ref{PatchGenFinal} are described in Table \ref{tab:Param}.\\

We now show that the model is {\it biologically} well posed.

\begin{lemma}\label{Boundedness} \hfill \\
The set $$\Omega=\{(S_{h,i},E_{h,i},I_{h,i}, E_{v,i},I_{v,i})\in\mathbb R^{6}_+ \quad\mid\quad S_{h,i}+E_{h,i}+I_{h,i}\leq N_{h,i}, \; E_{v,i}+I_{v,i}\leq N_{v,i} \}$$
 is a compact positively invariant for the System (\ref{PatchGenFinal}).
\end{lemma}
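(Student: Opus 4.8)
The plan is to verify four properties in turn: well-posedness of the initial value problem, nonnegativity of all state variables, the two linear population constraints that cut out $\Omega$, and finally closedness together with boundedness, from which compactness follows. I would first observe that the right-hand side of System~(\ref{PatchGenFinal}) is continuously differentiable on a neighborhood of $\Omega$: the only candidate singularities are the denominators $p_{1j}N_{h,1}+p_{2j}N_{h,2}$ and $\sum_{k}p_{ki}N_{h,k}$, which are bounded below by a positive constant as soon as the residence-time matrix $(p_{ij})$ has no zero column and the total populations $N_{h,i}$ are positive. Hence the standard Cauchy--Lipschitz theorem gives a unique maximal solution through each initial point of $\Omega$, and it remains only to show that such a solution never leaves $\Omega$, which in particular makes it global.

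Nonnegativity and the population bounds have to be treated together, because the inward-pointing property of the vector field on one face of the nonnegative orthant relies on one of the population estimates. On the faces $\{S_{h,i}=0\}$, $\{E_{h,i}=0\}$, $\{I_{h,i}=0\}$ and $\{I_{v,i}=0\}$ one reads off directly that $\dot S_{h,i}=\mu_hN_{h,i}\ge 0$, that $\dot E_{h,i}=\beta_{vh}S_{h,i}\sum_{j}a_jp_{ij}\frac{I_{v,j}}{p_{1j}N_{h,1}+p_{2j}N_{h,2}}\ge 0$, that $\dot I_{h,i}=\nu_hE_{h,i}\ge 0$ and that $\dot I_{v,i}=\nu_vE_{v,i}\ge 0$ whenever the remaining coordinates are nonnegative. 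On the face $\{E_{v,i}=0\}$, however, $\dot E_{v,i}=a_i\beta_{hv}(N_{v,i}-I_{v,i})\frac{\sum_{j}p_{ji}I_{h,j}}{\sum_{k}p_{ki}N_{h,k}}$, which is nonnegative only once we know $I_{v,i}\le N_{v,i}$.

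To close this loop, set $X_i:=S_{h,i}+E_{h,i}+I_{h,i}$. Summing the first three equations gives $\dot X_i=\mu_h(N_{h,i}-X_i)-\gamma_iI_{h,i}\le \mu_h(N_{h,i}-X_i)$, so Gronwall's inequality yields $X_i(t)\le N_{h,i}+(X_i(0)-N_{h,i})e^{-\mu_h t}\le N_{h,i}$ for all $t$ in the interval of existence. In particular $0\le I_{h,j}\le N_{h,j}$, hence $\theta_i:=\frac{\sum_{j}p_{ji}I_{h,j}}{\sum_{k}p_{ki}N_{h,k}}\in[0,1]$. Now set $Y_i:=E_{v,i}+I_{v,i}$; summing the last two equations (reading the forcing term in $\dot I_{v,i}$ as $\nu_vE_{v,i}$, consistent with System~(\ref{1Patch})) gives $\dot Y_i=a_i\beta_{hv}\theta_i(N_{v,i}-Y_i)-\mu_vE_{v,i}-(1-q)\mu_vI_{v,i}\le a_i\beta_{hv}\theta_i(N_{v,i}-Y_i)$, and Gronwall once more gives $Y_i(t)\le N_{v,i}$. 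This validates the boundary condition on $\{E_{v,i}=0\}$, so by the Nagumo sub-tangent criterion $\Omega$ is positively invariant and every solution starting in $\Omega$ is defined for all $t\ge 0$.

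Finally, $\Omega$ is the intersection of the closed nonnegative orthant with finitely many closed half-spaces, hence closed, and every coordinate is bounded above by $\max_i N_{h,i}$ or $\max_i N_{v,i}$, hence $\Omega$ is bounded; a closed bounded subset of Euclidean space is compact. The only real obstacle in the argument is the apparent circularity flagged above: the escape analysis on the face $E_{v,i}=0$ needs the a priori bound $I_{v,i}\le N_{v,i}$, so positivity cannot be checked in isolation. The clean fix is to derive the scalar differential inequalities for $X_i$ and $Y_i$ first, valid on the whole maximal interval of existence, and only then invoke the boundary/tangency conditions — equivalently, to apply the sub-tangent criterion to the full region $\Omega$ at once rather than proving orthant-invariance and the population bounds separately.
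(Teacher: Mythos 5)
Your proof is correct and follows the same basic strategy as the paper's: nonnegativity via inspection of the vector field on the boundary faces, the host bound from conservation of $S_{h,i}+E_{h,i}+I_{h,i}$, and the vector bound from the sign of $\dot E_{v,i}+\dot I_{v,i}$ on the face $E_{v,i}+I_{v,i}=N_{v,i}$. Where you go beyond the paper is in handling the logical order of these steps. The paper asserts that the positive orthant is ``clearly'' invariant and then describes $\Omega$ as an intersection of three separately positively invariant sets; as you correctly point out, this decomposition is loose, because the orthant is not invariant on its own (on the face $E_{v,i}=0$ one needs $I_{v,i}\leq N_{v,i}$ to conclude $\dot E_{v,i}\geq 0$), and conversely the half-space $\{E_{v,i}+I_{v,i}\leq N_{v,i}\}$ is only invariant given nonnegativity. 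Applying the sub-tangent (Nagumo) criterion to the closed convex set $\Omega$ as a whole, so that at each boundary point all the remaining constraints are available, is exactly the right repair, and your Gronwall inequalities for $X_i$ and $Y_i$ make the argument airtight. You also correctly read the equation $\dot I_{v,i}=\nu_v E_{h,i}-(1-q)\mu_v I_{v,i}$ as containing a typo ($E_{h,i}$ should be $E_{v,i}$, consistent with the single-patch model); without that correction the face analysis on $\{I_{v,i}=0\}$ and the computation of $\dot Y_i$ would not close. In short: same route, but your version is the one that actually withstands scrutiny.
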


\begin{proof}\hfill

The positive orthant is clearly positively invariant. Since the host population is constant, then the inequality $S_i+E_i+I_i\leq N_{h,i}$ is always satisfied. We have 
\begin{eqnarray}
\dot E_{v,i}+\dot I_{v,i}\mid_{E_{h,i}=N_{v,i}}&=&-a_i\beta_{hv}I_{v,i}\frac{ \sum_{j=1}^{2}p_{ji}I_{h,j}}{ \sum_{k=1}^{2}p_{ki}N_{h,k}}-(\mu_v+\nu_v)N_{v,i}\nonumber\\
&\leq& 0\nonumber
\end{eqnarray}
Hence, $E_{v,i}+I_{v,i}\leq N_{v,i}$ and the set $\Omega$, an intersection of positively invariant sets ( $\mathbb R^7_+$, $\{S_i+E_i+I_i\leq N_{h,i}\}$ and $\{E_{v,i}+I_{v,i}\leq N_{v,i}\}$), is positively invariant; the set is a compact set.\\
\end{proof}

\section{Equilibria and stability analysis}
\label{sec:analytical}
This section characterizes the equilibrium dynamics of Model (\ref{PatchGenFinal}).

\subsection{The disease free equilibrium and the basic reproduction number}
The disease free equilibrium is $$E_0=(N_{h,1},N_{h,2},\textbf{0}_{\mathbb R^{8}}),$$
which is used to compute the basic reproduction number via the next generation method \cite{MR1057044,VddWat02}. 
The basic reproduction number $\mathcal R_0$ is defined by the expression (See Appendix \ref{R0}, for details), $\mathcal R_0^2=\rho(M_{vh}M_{hv})$, that is, the spectral radius of the matrix of $M_{vh}M_{hv}$,
where 
$$
   M_{vh}= \left(    \begin{array}{cc}\frac{a_1\beta_{vh}p_{11} N_{h,1}\nu_v}{(p_{11}N_{h,1}+p_{21}N_{h,2})(\mu_v+\nu_v)(1-q)\mu_v} & \frac{a_2\beta_{vh}p_{12}N_{h,1}\nu_v}{(p_{12}N_{h,1}+p_{22}N_{h,2}) (\mu_v+\nu_v)(1-q)\mu_v}\\ \frac{a_1\beta_{vh}p_{21}N_{h,2}\nu_v}{ (p_{11}N_{h,1}+p_{21}N_{h,2})(\mu_v+\nu_v)(1-q)\mu_v} & \frac{a_2\beta_{vh}p_{22}N_{h,2}\nu_v}{ (p_{12}N_{h,1}+p_{22}N_{h,2})(\mu_v+\nu_v)(1-q)\mu_v}\end{array}\right)
$$
and 
$$
   M_{hv}= \left(    \begin{array}{cc}\frac{a_1\beta_{hv}p_{11}  N_{v,1}\nu_h}{(p_{11}N_{h,1}+p_{21}N_{h,2})(\mu_h+\nu_h)(\mu_h+\gamma_1)} & \frac{a_1\beta_{hv}p_{21} N_{v,1}\nu_h}{(p_{11}N_{h,1}+p_{21}N_{h,2}) (\mu_h+\nu_h)(\mu_h+\gamma_2)}\\
   \frac{a_2\beta_{hv}p_{12} N_{v,2}\nu_h}{ (p_{12}N_{h,1}+p_{22}N_{h,2})(\mu_h+\nu_h)(\mu_h+\gamma_1)} & \frac{a_2\beta_{hv}p_{22} N_{v,2}\nu_h}{ (p_{12}N_{h,1}+p_{22}N_{h,2})(\mu_h+\nu_h)(\mu_h+\gamma_2)}
   \end{array}\right).
$$

The matrix $\left(    \begin{array}{cc}0 & M_{vh}\\
   M_{hv} & 0
   \end{array}\right)$ is called the host-vector network configuration \cite{iggidr2014dynamics}. The result of local asymptotic stability if $\mathcal R_0^2<1$ and instability if $\mathcal R_0^2>1$ has been established in \cite{VddWat02}. The following theorem gives the global result of the DFE. 
\begin{theorem}
If $\mathcal R_0^2\leq1$, the DFE is globally asymptotically stable in the nonnegative
orthant. If $\mathcal R_0^2>1$, the DFE is unstable.
\end{theorem}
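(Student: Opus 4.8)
The instability assertion for $\mathcal R_0^2>1$ is exactly the corresponding half of Theorem~2 of \cite{VddWat02} applied to the next-generation pair used to define $\mathcal R_0$; the substance is therefore the global asymptotic stability of $E_0$ when $\mathcal R_0^2\le 1$. By Lemma~\ref{Boundedness} we may work inside the compact positively invariant set $\Omega$, and since local asymptotic stability is already known (\cite{VddWat02}), it is enough to prove that every trajectory in $\Omega$ converges to $E_0$. I will assume, as is implicit in the notion of the host--vector network configuration $\left(\begin{smallmatrix}0&M_{vh}\\ M_{hv}&0\end{smallmatrix}\right)$, that this matrix is irreducible (otherwise one decomposes into strongly connected components and argues componentwise). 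The first step is to collect the eight infected coordinates into $x=(E_{h,1},E_{h,2},I_{h,1},I_{h,2},E_{v,1},E_{v,2},I_{v,1},I_{v,2})^{\mathsf{T}}$ and rewrite (\ref{PatchGenFinal}) as $\dot x=(F-V)x-g$, where $F\ge0$ is the new-infection matrix and $V$ the nonsingular $M$-matrix of transitions of the next-generation decomposition (so $\rho(FV^{-1})=\mathcal R_0$, $\mathcal R_0^2=\rho(M_{vh}M_{hv})$), and $g=g(S_{h,1},S_{h,2},x)$ collects precisely the terms discarded when one replaces $S_{h,i}\le N_{h,i}$ and $N_{v,i}-E_{v,i}-I_{v,i}\le N_{v,i}$; explicitly $g$ has $E_{h,i}$-component $\beta_{vh}(N_{h,i}-S_{h,i})\sum_j a_jp_{ij}I_{v,j}/(p_{1j}N_{h,1}+p_{2j}N_{h,2})$ and $E_{v,i}$-component $a_i\beta_{hv}(E_{v,i}+I_{v,i})\sum_j p_{ji}I_{h,j}/\sum_k p_{ki}N_{h,k}$, and is nonnegative on $\Omega$. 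Since $V$ is a nonsingular $M$-matrix, $V^{-1}\ge0$, and by \cite{VddWat02}, $\mathcal R_0^2\le1\iff\mathcal R_0\le1\iff s(F-V)\le0$, where $s(\cdot)$ is the spectral abscissa.

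\textbf{Lyapunov function.} Let $w\ge0$, $w\neq0$, be a left eigenvector of the nonnegative matrix $V^{-1}F$ for its spectral radius $\mathcal R_0$, and put $\mathcal L=w^{\mathsf{T}}V^{-1}x\ge0$ on $\Omega$. Differentiating along (\ref{PatchGenFinal}),
$$\dot{\mathcal L}=w^{\mathsf{T}}V^{-1}\big[(F-V)x-g\big]=(\mathcal R_0-1)\,w^{\mathsf{T}}x-w^{\mathsf{T}}V^{-1}g\;\le\;0\qquad(\mathcal R_0\le1),$$
because $w^{\mathsf{T}}x\ge0$, $V^{-1}\ge0$, $g\ge0$. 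A short argument on the sign pattern of $w$ is needed: the columns of $V^{-1}F$ indexed by the exposed compartments $E_{h,i},E_{v,i}$ vanish, so $w$ is supported on the four infectious indices, on which $V^{-1}F$ restricts to a matrix of the same host--vector configuration type, hence irreducible; thus $w$ is strictly positive there and $w^{\mathsf{T}}V^{-1}>0$ componentwise. Consequently, on the set $\{\dot{\mathcal L}=0\}$ one has $g\equiv0$, together with $w^{\mathsf{T}}x=0$ (equivalently $I_{h,i}=I_{v,i}=0$) when $\mathcal R_0<1$.

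\textbf{LaSalle and the threshold case.} By LaSalle's invariance principle, every trajectory in $\Omega$ approaches the largest invariant set $M\subseteq\{\dot{\mathcal L}=0\}$. On $M$ we have $g\equiv0$, so the infected block obeys the linear system $\dot x=(F-V)x$; here $F-V$ is an irreducible Metzler matrix, and when $\mathcal R_0=1$ its spectral abscissa $s(F-V)=0$ is a simple eigenvalue and the only one on the imaginary axis. Hence any solution that stays in the compact set $M$ for all $t\in\mathbb R$ must be a constant multiple $x\equiv c\,v$, $c\ge0$, of the positive right Perron eigenvector $v>0$ of $F-V$. If $c>0$, then $x\equiv cv>0$, which makes the $E_{v,i}$-component $a_i\beta_{hv}(E_{v,i}+I_{v,i})\sum_j p_{ji}I_{h,j}/\sum_k p_{ki}N_{h,k}$ of $g$ strictly positive, contradicting $g\equiv0$ on $M$; therefore $c=0$ and $x\equiv0$ on $M$ (when $\mathcal R_0<1$ this follows at once from $I_h=I_v=0$). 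On $\{x=0\}$ the remaining equations are $\dot S_{h,i}=\mu_h(N_{h,i}-S_{h,i})$, whose only solution bounded for all $t\in\mathbb R$ is $S_{h,i}\equiv N_{h,i}$, so $M=\{E_0\}$. Thus every trajectory in $\Omega$ converges to $E_0$; combined with local asymptotic stability (\cite{VddWat02}) this gives global asymptotic stability when $\mathcal R_0^2\le1$, and instability for $\mathcal R_0^2>1$ is the remaining part of the cited result.

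\textbf{Main obstacle.} The only step that is not routine is the threshold case $\mathcal R_0^2=1$: there $\dot{\mathcal L}$ controls only the remainder $g$ and not $x$ itself, so convergence cannot be concluded directly, and the resolution requires the Perron--Frobenius structure of the critical Metzler matrix $F-V$ --- simplicity of its zero eigenvalue, positivity of the eigenvector $v$, and the fact that $0$ is its unique eigenvalue on the imaginary axis --- to rule out a persistent center-eigenspace solution $x\equiv c\,v$. This in turn rests on the irreducibility of the host--vector network configuration. A secondary technical point is the careful determination of the support of the left eigenvector $w$, which is what guarantees $w^{\mathsf{T}}V^{-1}>0$ and hence that $\dot{\mathcal L}=0$ forces $g=0$.
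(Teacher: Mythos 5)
Your proof is correct, but it follows a genuinely different route from the paper's. The paper bounds the infected block from above by the linear system $\dot x=(F+V)x$ (obtained by replacing $S_{h,i}$ with $N_{h,i}$ and $N_{v,i}-E_{v,i}-I_{v,i}$ with $N_{v,i}$), invokes the equivalence $\rho(-FV^{-1})<1\iff\alpha(F+V)<0$ to conclude that this comparison system decays to zero, applies the comparison theorem of \cite{cite-SmitWalt95} to get $E_{h,i},I_{h,i},E_{v,i},I_{v,i}\to0$, and finishes with the theory of asymptotically autonomous systems \cite{CasThi95} for the susceptible equations; you instead keep the discarded nonlinear remainder $g\ge0$ explicitly, build the matrix-theoretic Lyapunov function $\mathcal L=w^{\mathsf T}V^{-1}x$ from a left Perron eigenvector of $V^{-1}F$, and close with LaSalle plus a Perron--Frobenius analysis of the critical Metzler matrix $F-V$. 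The trade-offs are real. Your argument covers the threshold case $\mathcal R_0^2=1$, which the paper's comparison argument does not: at $\mathcal R_0=1$ the stability modulus of $F+V$ is zero, the linear comparison system does not converge to the origin, and so the chain of implications in the paper's proof genuinely breaks there even though the theorem is stated for $\mathcal R_0^2\le1$. Conversely, the paper's route needs no irreducibility hypothesis, whereas your Lyapunov argument relies on irreducibility of the host--vector configuration (to get positivity of $w$ on the infectious coordinates and simplicity of the zero eigenvalue of $F-V$); you flag this and sketch the reduction to strongly connected components, which is acceptable but is an extra hypothesis relative to the theorem as stated. Your supporting computations --- the support of $w$, the sign of $g$ on $\Omega$, the identity $\dot{\mathcal L}=(\mathcal R_0-1)w^{\mathsf T}x-w^{\mathsf T}V^{-1}g$, and the fact that for an irreducible Metzler matrix the spectral abscissa is a simple eigenvalue and the only one on the line $\mathrm{Re}\,\lambda=s(F-V)$ --- all check out, and the instability half is, as in the paper, delegated to \cite{VddWat02}.
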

\begin{proof}\hfill 

We use the comparison theorem \cite{cite-SmitWalt95} to prove the GAS of the DFE. Since $S_{h,i}\leq N_{h,i}$ and $S_{v,i}\leq N_{v,i}$, we have that,
\begin{equation}\label{EhGAS}\dot E_{h,i}\leq\beta_{vh}N_{h,i}\sum_{j=1}^{2} a_jp_{ij}\frac{I_{v,j}}{p_{1j}N_{h,1}+p_{2j}N_{h,2}}-(\mu_h+\nu_h) E_{h,i}\end{equation} 
and 
\begin{equation}\label{EvGAS}\dot{E}_{v,i}\leq a_i\beta_{hv}N_{v,i}\frac{ \sum_{j=1}^{2}p_{ji}I_{h,j}}{ \sum_{k=1}^{2}p_{ki}N_{h,k}}-(\mu_v+\nu_v)E_{v,i}.\end{equation}
We define an auxiliary system via the right hand side of Equations  (\ref{EhGAS})-(\ref{EvGAS})  and the infected compartments of Equation (\ref{PatchGenFinal}) as follows:
 \begin{eqnarray}\label{Comp}\left(\begin{array}{c}
\dot E_{h,i}\\
\dot E_{v,i}\\
\dot I_{h,i}\\
\dot I_{v,i}
\end{array}\right)&=&\left(\begin{array}{c}
\beta_{vh}N_{h,i}\sum_{j=1}^{2} a_jp_{ij}\frac{I_{v,j}}{p_{1j}N_{h,1}+p_{2j}N_{h,2}}-(\mu_h+\nu_h) E_{h,i}\\
 a_i\beta_{hv}N_{v,i}\frac{ \sum_{j=1}^{2}p_{ji}I_{h,j}}{ \sum_{k=1}^{2}p_{ki}N_{h,k}}-(\mu_v+\nu_v)E_{v,i}\\
 \nu_h E_{h,i}-(\mu_h+\gamma_i) I_{h,i}\\
\nu_vE_{h,i}-(1-q)\mu_vI_{v,i}
\end{array}\right)\nonumber\\
&=&
 (F+V)\left(\begin{array}{c}
 E_{h,i}\\
E_{v,i}\\
 I_{h,i}\\
 I_{v,i}
\end{array}\right);
\end{eqnarray}
where the matrices $F$ and $V$ in (\ref{Comp}) were just generated using the next generation method. System (\ref{Comp}) is linear and its dynamics is well known. Since $V$ is a Metzler matrix and $F$ a nonnegative matrix (\cite{berman1979nonnegative}), then 
$$\rho(-FV^{-1})<1\iff \alpha(F+V)<0$$
where $ \alpha(F+V)$ is the stability modulus of $F+V$. Thus, if $\mathcal R_0=\rho(-FV^{-1})<1$, all the eigenvalues of $F+V$ are negative. Hence, the nonnegative solutions of (\ref{Comp}) are such that $$\lim_{t\to\infty}E_{h,i}=\lim_{t\to\infty}E_{v,i}=0\quad\text{and}\quad\lim_{t\to\infty}I_{h,i}=\lim_{t\to\infty}I_{v,i}=0$$
Since, all the variables in System (\ref{PatchGenFinal}) are nonnegative, the use of a comparison theorem \cite{cite-SmitWalt95} leads to,
$$\lim_{t\to\infty}E_{h,i}=\lim_{t\to\infty}E_{v,i}=0\quad\text{and}\quad\lim_{t\to\infty}I_{h,i}=\lim_{t\to\infty}I_{v,i}=0, \quad  i=1,2.$$
Therefore, by using the asymptotic theory of autonomous systems \cite{CasThi95}, System (\ref{PatchGenFinal}) has the qualitative dynamics of the following limit system:
$$
\dot S_{h,i}=\mu_hN_{h,i}-\mu_h S_{h,i}
$$
for which the equilibrium $(N_{h,1},N_{h,2})$ is globally asymptotically stable. If $\mathcal R_0>1$, the instability of the DFE follows from  \cite{MR1057044,VddWat02}.
\end{proof}
\begin{theorem}
If $\mathcal R_0>1$, System (\ref{PatchGenFinal}) is uniformly persistent, that is, it exists $\eta>0$ such that $\displaystyle \liminf_{t\to\infty}\{S_{h,i},E_{h,i},I_{h,i}, E_{v,i}, I_{v,i}\}>\eta$ for any initial conditions satisfying $S_{h,i}(0)>0$, $E_{h,i}(0)>0,$ $I_{h,i}(0)>0$, $E_{v,i}(0)>0$ and $I_{v,i}(0)>0$ for $i=1, 2$.
\end{theorem}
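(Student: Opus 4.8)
The plan is to apply the standard machinery for uniform persistence of dissipative semiflows: identify the extinction boundary, show that the disease-free equilibrium $E_0$ is the only invariant set there and is isolated and acyclic, and then upgrade the instability of $E_0$ (valid because $\mathcal R_0>1$) into the statement that $E_0$ is a uniform weak repeller for the interior. Throughout, one works inside the compact positively invariant set $\Omega$ of Lemma~\ref{Boundedness}, which is a global attractor for the semiflow $\phi_t$ generated by (\ref{PatchGenFinal}); write $X_0=\{x\in\Omega:\ E_{h,i}>0,\ I_{h,i}>0,\ E_{v,i}>0,\ I_{v,i}>0,\ i=1,2\}$ for the ``endemic'' set and $\partial X_0=\Omega\setminus X_0$ for the extinction boundary.

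First I would settle the invariance facts. The set $\partial X_0$ is relatively closed, and the set $M_\partial$ of points whose forward orbit stays in $\partial X_0$ is invariant by construction. For the positive invariance of $X_0$, note that while $S_{h,i}>0$ and $N_{v,i}-E_{v,i}-I_{v,i}>0$ the infected block $z=(E_{h,i},I_{h,i},E_{v,i},I_{v,i})_{i=1,2}$ satisfies $\dot z\ge Az$ componentwise for a Metzler matrix $A$ whose associated digraph is strongly connected (the chain $I_{v,j}\to E_{h,i}\to I_{h,i}$ together with $I_{h,j}\to E_{v,i}\to I_{v,i}$ couples all eight infected coordinates across both patches through the positive residence weights $p_{ij}$), so by irreducibility a nonzero nonnegative $z(0)$ yields $z(t)\gg 0$ for $t>0$; combined with $S_{h,i}(t)>0$ this gives $\phi_t(X_0)\subseteq X_0$. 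Next I would characterize $M_\partial$: if the forward orbit of $x$ remains in $\partial X_0$, then some infected coordinate fails to be everywhere positive, and the same cascade structure forces, along that orbit, every infected coordinate to converge to $0$ --- this is exactly the conclusion produced by the comparison argument in the proof of the preceding theorem, applied on the boundary --- after which the $\dot S_{h,i}$ equations give $S_{h,i}\to N_{h,i}$. Hence $\bigcup_{x\in M_\partial}\omega(x)=\{E_0\}$, a single point, trivially isolated in $\Omega$ and acyclic.

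The crux is to show that $E_0$ is a uniform weak repeller for $X_0$, i.e. there is $\eta_0>0$ with $\limsup_{t\to\infty}\|\phi_t(x)-E_0\|\ge\eta_0$ for every $x\in X_0$; equivalently, no interior orbit converges to $E_0$. Since $\mathcal R_0>1$, the matrix $F+V$ constructed in the proof of the previous theorem has stability modulus $\alpha(F+V)>0$ (the content of the equivalence $\rho(-FV^{-1})<1\iff\alpha(F+V)<0$ used there), and being irreducible with nonnegative off-diagonal part it has a strictly positive left eigenvector $w$ with $w^\top(F+V)=\alpha(F+V)\,w^\top$. Suppose, for contradiction, that $\phi_t(x)\to E_0$ for some $x\in X_0$. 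Then along this orbit $S_{h,i}\to N_{h,i}$ and $N_{v,i}-E_{v,i}-I_{v,i}\to N_{v,i}$, so for any $\varepsilon>0$ there is $T$ with $\dot z\ge(F+V-\varepsilon B)z$ for $t\ge T$, where $B\ge0$ collects the bounded incidence coefficients; hence $\tfrac{d}{dt}(w^\top z)\ge(\alpha(F+V)-\varepsilon')\,w^\top z$ with $\varepsilon'\to0$ as $\varepsilon\to0$, and choosing $\varepsilon$ small makes the bracket positive. Thus $w^\top z(t)$ grows at least exponentially, contradicting $z(t)\to0$ (note $z(T)>0$ since $x\in X_0$).

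Finally, with $\Omega$ compact and $\phi_t$ dissipative, $\partial X_0$ invariant, $\bigcup_{x\in M_\partial}\omega(x)=\{E_0\}$ an isolated acyclic covering of the boundary dynamics, and $E_0$ a uniform weak repeller --- hence, being isolated, a uniform strong repeller --- for $X_0$, the standard persistence theorem for dissipative semiflows gives that $\phi_t$ is uniformly persistent with respect to $(X_0,\partial X_0)$: there is $\eta>0$ with $\liminf_{t\to\infty}\min\{E_{h,i},I_{h,i},E_{v,i},I_{v,i}\}\ge\eta$. Since $\dot S_{h,i}\ge\mu_hN_{h,i}-(\mu_h+C)S_{h,i}$ on $\Omega$ for a constant $C>0$, one also has $\liminf_{t\to\infty}S_{h,i}\ge\mu_hN_{h,i}/(\mu_h+C)>0$, and taking the minimum of the two bounds yields the stated conclusion. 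The main obstacle is the crux step: turning the linear instability of $E_0$ when $\mathcal R_0>1$ into the uniform weak-repeller property, which requires the Perron left-eigenvector comparison together with careful control of the susceptible-depletion factors $S_{h,i}/N_{h,i}$ and $(N_{v,i}-E_{v,i}-I_{v,i})/N_{v,i}$ near $E_0$, as well as the irreducibility needed for $F+V$ to admit a strictly positive eigenvector.
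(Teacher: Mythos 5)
Your proposal is correct and follows the same overall skeleton as the paper's proof: both work inside the compact invariant set $\Omega$ of Lemma \ref{Boundedness}, split it into an interior part $X_0$ and an extinction boundary $\partial X_0$, show that the maximal invariant boundary set $M_\partial$ admits the acyclic isolated covering $\{E_0\}$, and invoke the abstract persistence theorem for dissipative semiflows (Zhao, Theorem 1.3.1). The differences are in the details, and they mostly favor your version. First, the paper takes $X_0=\{I_{v,1}+I_{v,2}>0\}$, so its application of the theorem literally yields only $\liminf_{t\to\infty}(I_{v,1}+I_{v,2})>\eta$ and leaves unaddressed the propagation of persistence to the remaining compartments (including $S_{h,i}$); your choice of $X_0$ as the set where all infected coordinates are positive, together with the irreducibility/cascade argument for its forward invariance and the separate differential inequality for $S_{h,i}$, delivers the stated conclusion in full. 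Second, and more substantively, the paper disposes of the key hypothesis $W^s(E_0)\cap X_0=\emptyset$ with the single phrase ``the DFE is unstable on $X_0$,'' whereas you actually prove it via the Perron left-eigenvector comparison $\frac{d}{dt}(w^\top z)\ge(\alpha(F+V)-\varepsilon')\,w^\top z$ near $E_0$; linear instability alone does not give the uniform weak-repeller property for free, and supplying this step is the real content of the proof. One caveat applies to both arguments equally: the left eigenvector $w$ is strictly positive (and your cascade couples the two patches) only when the host--vector configuration is irreducible, a hypothesis the theorem does not state; in the reducible case (e.g.\ $p_{12}=p_{21}=0$ with one patch subcritical, precisely the situation simulated in Figure \ref{fig:twofigsIhReducible}) the conclusion as stated fails, so the irreducibility assumption you flag at the end should be made explicit rather than left as a remark.
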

\begin{proof}
Let $X=\Omega$, $x= (S_{h,1}, S_{h,2}, E_{h,1}, E_{h,2}, E_{v,1}, E_{v,2}, I_{h,1}, I_{h,2}, I_{v,1}, I_{v,2})$ and $X_0=\{  x\in X \;\mid\; I_{v,1}+I_{v,2}>0  \}$
Hence, $\partial X_0=X\backslash X_0=\{x\in X \;\mid\; I_{v,1}=I_{v,2}=0\}$. Let $\phi_t$ be semi-flow induced by the solutions of (\ref{PatchGenFinal}) and $M_\partial=\{ x\in\partial X_0\; \mid\; \phi_tx\in\partial X_0, t\geq0 \}$. By Lemma \ref{Boundedness}, we have $\phi_tX_0\subset X_0$ and $\phi_t$ is bounded in $X_0$. Therefore a global attractor for $\phi_t$ exists . The DFE is the unique equilibrium on the manifold $\partial X_0$ and is GAS on $\partial X_0$. Moreover $\cup_{x\in M_\partial} \omega (x)=\{E_0\}$ and no subset of $M$ forms a cycle in $\partial X_0$. Finally since the DFE is unstable on $X_0$ if $\mathcal R_0>1$, we deduce that System (\ref{PatchGenFinal}) is uniformly persistent by using a result from \cite{zhao2013dynamical} (Theorem 1.3.1 and Remark 1.3.1).
\end{proof}
\begin{theorem}
Whenever the host-vector configuration is irreducible and $\mathcal R_0^2>1$, System (\ref{PatchGenFinal}) has a unique endemic equilibrium.
\end{theorem}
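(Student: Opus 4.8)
The plan is to reduce the equilibrium equations of (\ref{PatchGenFinal}) to a fixed-point equation for the two host forces of infection $\Lambda=(\Lambda_1,\Lambda_2)$, $\Lambda_i=\beta_{vh}\sum_{j=1}^2 a_j p_{ij} I_{v,j}/(p_{1j}N_{h,1}+p_{2j}N_{h,2})$, and then to invoke the fixed-point theory of monotone subhomogeneous (concave) operators. First I set every right-hand side of (\ref{PatchGenFinal}) to zero. From $\dot S_{h,i}=0$, $\dot E_{h,i}=0$ and $\dot I_{h,i}=0$ one solves the host side explicitly: $S_{h,i}=\mu_h N_{h,i}/(\mu_h+\Lambda_i)$, $E_{h,i}=\frac{\mu_h+\gamma_i}{\nu_h}I_{h,i}$, and $I_{h,i}=h_i(\Lambda_i):=\dfrac{\nu_h\mu_h N_{h,i}\,\Lambda_i}{(\mu_h+\gamma_i)(\mu_h+\nu_h)(\mu_h+\Lambda_i)}$. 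From the vector equations one gets $E_{v,i}=\frac{(1-q)\mu_v}{\nu_v}I_{v,i}$ and, substituting into $\dot E_{v,i}=0$, $I_{v,i}=g_i(T_i)$ where $T_i:=\dfrac{\sum_j p_{ji}I_{h,j}}{\sum_k p_{ki}N_{h,k}}$ and $g_i$ is an increasing, strictly concave rational function with $g_i(0)=0$ that is bounded above. Composing the chain $\Lambda\mapsto (h_i(\Lambda_i))_i\mapsto (T_i)_i\mapsto (g_i(T_i))_i\mapsto \Lambda$ closes the loop and yields a self-map $\mathcal F:\mathbb R_+^2\to\mathbb R_+^2$ with $\mathcal F_i(\Lambda)=\beta_{vh}\sum_j\dfrac{a_j p_{ij}}{p_{1j}N_{h,1}+p_{2j}N_{h,2}}\,g_j\!\Big(\dfrac{\sum_k p_{kj}h_k(\Lambda_k)}{\sum_k p_{kj}N_{h,k}}\Big)$. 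Using irreducibility of the host-vector configuration one checks that the positive fixed points of $\mathcal F$ are in one-to-one correspondence with the endemic equilibria of (\ref{PatchGenFinal}), back-substitution recovering all of $S_{h,i},E_{h,i},I_{h,i},E_{v,i},I_{v,i}$ with strictly positive values inside $\Omega$.

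Next I would record the structural properties of $\mathcal F$: it is continuous and $C^1$ on $\mathbb R_+^2$; it is monotone nondecreasing, being assembled from nonnegative linear maps and the increasing scalar functions $g_j,h_k$; it is bounded, hence maps the box $B=\prod_i[0,b_i]$ with $b_i:=\beta_{vh}\sum_j a_j p_{ij}\sup g_j/(p_{1j}N_{h,1}+p_{2j}N_{h,2})$ into itself; and it is strictly subhomogeneous, i.e. $\mathcal F(\alpha\Lambda)>\alpha\mathcal F(\Lambda)$ componentwise for every $0<\alpha<1$ and $\Lambda>0$, because each $g_j,h_k$ is strictly concave, increasing and vanishes at $0$, and this property is preserved under composition with nonnegative linear maps and under nonnegative linear combinations. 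Finally I would differentiate at the origin: $D\mathcal F(0)_{ik}=\beta_{vh}\sum_j\dfrac{a_j p_{ij}\,g_j'(0)\,p_{kj}\,h_k'(0)}{(p_{1j}N_{h,1}+p_{2j}N_{h,2})\,\sum_m p_{mj}N_{h,m}}$ with $g_j'(0)=\dfrac{a_j\beta_{hv}N_{v,j}\nu_v}{(\mu_v+\nu_v)(1-q)\mu_v}$ and $h_k'(0)=\dfrac{\nu_h N_{h,k}}{(\mu_h+\gamma_k)(\mu_h+\nu_h)}$, and identify $\rho\big(D\mathcal F(0)\big)$ with $\rho(M_{vh}M_{hv})=\mathcal R_0^2$ (the two have the same spectral radius, being related by an invertible diagonal change of coordinates); moreover irreducibility of the host-vector network configuration is precisely irreducibility of the nonnegative $2\times 2$ matrix $D\mathcal F(0)$.

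With these properties in hand the conclusion follows from the standard dichotomy for monotone, strictly subhomogeneous self-maps with irreducible linearization at $0$ (Krasnoselskii-type results on concave operators, or the subhomogeneous-map theorems in \cite{zhao2013dynamical}): such a map has a unique nonzero fixed point when $\rho(D\mathcal F(0))>1$ and only the zero fixed point otherwise. Since $\mathcal R_0^2=\rho(D\mathcal F(0))>1$ by hypothesis, $\mathcal F$ possesses a unique fixed point $\Lambda^*\gg 0$ in $B$; substituting back through $h_i$, $T_i$, $g_i$ and the explicit expressions above produces the unique endemic equilibrium of (\ref{PatchGenFinal}), all of whose coordinates are strictly positive.

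I expect the real work to be two essentially bookkeeping verifications rather than anything deep. First, establishing strict subhomogeneity of $\mathcal F$ rigorously: one must track the concavity through the linear averaging operators $\Lambda\mapsto T_j$ component by component and confirm that strictness survives the compositions, since the concave-operator uniqueness theorems require strict subhomogeneity together with irreducibility. Second, the algebraic identification $\rho(D\mathcal F(0))=\rho(M_{vh}M_{hv})=\mathcal R_0^2$, i.e. checking that differentiating the reduced map at $0$ reproduces, up to a diagonal similarity, the next-generation product from the Appendix. The remaining points — boundedness and monotonicity of $g_i,h_i$, invariance of the box $B$, and the one-to-one correspondence between positive fixed points and endemic equilibria — are routine.
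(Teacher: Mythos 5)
Your proposal is correct and follows essentially the same route as the paper: both reduce the equilibrium conditions to a fixed point of a bounded, monotone, concave (strictly subhomogeneous) self-map, identify the spectral radius of its derivative at the origin with $\mathcal R_0^2$, and invoke a threshold uniqueness theorem for such maps under irreducibility --- the paper via Theorem 2.1 of Hethcote and Thieme \cite{hethcote1985stability} applied to a map in the exposed compartments, you via the equivalent Krasnoselskii/subhomogeneous-map results of \cite{zhao2013dynamical} applied to the two-dimensional force-of-infection map. The only differences are bookkeeping: your lower-dimensional reduction and the diagonal-similarity argument identifying $\rho(D\mathcal F(0))$ with $\rho(M_{vh}M_{hv})$ are sound and, if anything, cleaner than the paper's computation.
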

\begin{proof}
We will use a result by Hethcote and Thieme \cite{hethcote1985stability} to prove the uniqueness of the endemic equilibrium. An endemic equilibrium $(\bar S_{h,1},\bar S_{h,2},\bar E_{h,1},\bar E_{h,2},\bar E_{v,1},\bar E_{v,2},\bar I_{h,1},\bar I_{h,2},\bar I_{v,1},\bar I_{v,2})$ satisfies:
\begin{equation} \label{EEl}
\left\{\begin{array}{llll}
\mu_hN_{h,i}=\beta_{vh}\bar S_{h,i}\sum_{j=1}^{2} a_jp_{ij}\frac{\bar I_{v,j}}{p_{1j}N_{h,1}+p_{2j}N_{h,2}}+\mu_h \bar S_{h,i},\\
(\mu_h+\nu_h) \bar E_{h,i}=\beta_{vh}\bar S_{h,i}\sum_{j=1}^{2} a_jp_{ij}\frac{\bar I_{v,j}}{p_{1j}N_{h,1}+p_{2j}N_{h,2}},\\
\nu_h \bar E_{h,i}=(\mu_h+\gamma_i)\bar I_{h,i},\\
(\mu_v+\nu_v)\bar E_{v,i}=a_i\beta_{hv}( N_{v,i}-\bar E_{v,i}-\bar I_{v,i})\frac{ \sum_{j=1}^{2}p_{ji}\bar I_{h,j}}{ \sum_{k=1}^{2}p_{ki}N_{h,k}},\\
(1-q)\mu_vI_{v,i}=\nu_vE_{v,i}.
\end{array}\right.
\end{equation}
The first equation of (\ref{EEl}) implies that $$\bar S_{h,i}=\frac{\mu_h N_{h,i}}{\beta_{vh}\sum_{j=1}^{2} a_jp_{ij}\frac{\bar I_{v,j}}{p_{1j}N_{h,1}+p_{2j}N_{h,2}}+\mu_h}.$$
Hence, we deduce that, from System (\ref{EEl}), that
\begin{equation} \label{EEl2}
\left\{\begin{array}{llll}
 \bar E_{h,i}=\frac{\beta_{vh}}{\mu_h+\nu_h}\frac{\mu_h N_{h,i}}{\beta_{vh}\sum_{j=1}^{2} a_jp_{ij}\frac{\bar I_{v,j}}{p_{1j}N_{h,1}+p_{2j}N_{h,2}}+\mu_h }\sum_{j=1}^{2} a_jp_{ij}\frac{\bar I_{v,j}}{p_{1j}N_{h,1}+p_{2j}N_{h,2}},\\
 \bar I_{h,i}=\frac{\nu_h}{\mu_h+\gamma_i}\bar E_{h,i},\\
\bar E_{v,i}=\frac{a_i\beta_{hv}}{\mu_v+\nu_v}(N_{v,i}-\bar E_{v,i}-\bar I_{v,i})\frac{ \sum_{j=1}^{2}p_{ji}\bar I_{h,j}}{ \sum_{k=1}^{2}p_{ki}N_{h,k}},\\
\bar I_{v,i}=\frac{\nu_v}{(1-q)\mu_v}\bar E_{v,i}.
\end{array}\right.
\end{equation}

Let 
$$F(x)=\left(\begin{array}{c}
\frac{\beta_{vh}\nu_v}{(1-q)(\mu_h+\nu_h)\mu_v}\frac{\mu_h N_{h,1}}{\beta_{vh}\sum_{j=1}^{2} \frac{a_jp_{1j}\nu_v}{(1-q)\mu_v}\frac{\bar E_{v,j}}{p_{1j}N_{h,1}+p_{2j}N_{h,2}}+\mu_h}\sum_{j=1}^{2} a_jp_{1j}\frac{\bar E_{v,j}}{p_{1j}N_{h,1}+p_{2j}N_{h,2}}\\

\frac{\beta_{vh}\nu_v}{(1-q)(\mu_h+\nu_h)\mu_v}\frac{\mu_h N_{h,2}}{\beta_{vh}\sum_{j=1}^{2} \frac{a_jp_{2j}\nu_v}{(1-q)\mu_v}\frac{\bar E_{v,j}}{p_{1j}N_{h,1}+p_{2j}N_{h,2}}+\mu_h}\sum_{j=1}^{2} a_jp_{2j}\frac{\bar E_{v,j}}{p_{1j}N_{h,1}+p_{2j}N_{h,2}}\\

\frac{a_1\beta_{hv}\nu_h}{(\mu_v+\nu_v)(\mu_h+\gamma_1)}(N_{v,1}-\bar E_{v,1}-\frac{\nu_v}{(1-q)\mu_v}\bar E_{v,1})\frac{ \sum_{j=1}^{2}p_{j1}\bar E_{h,j}}{ \sum_{k=1}^{2}p_{k1}N_{h,k}}\\

\frac{a_2\beta_{hv}\nu_h}{(\mu_v+\nu_v)(\mu_h+\gamma_2)}(N_{v,2}-\bar E_{v,2}-\frac{\nu_v}{(1-q)\mu_v}\bar E_{v,2})\frac{ \sum_{j=1}^{2}p_{j2}\bar E_{h,j}}{ \sum_{k=1}^{2}p_{k2}N_{h,k}}

\end{array}\right)$$

where $x=(\bar E_{h,1},\bar E_{h,2},\bar E_{v,1},\bar E_{v,2},\bar I_{h,1},\bar I_{h,2})$. The function $F(x)$ is continuous, bounded, differentiable and $F(0_{\mathbb R^6})=0_{\mathbb R^6}$.  The function $F$ is monotone if the corresponding Jacobian matrix is Metzler, i.e all off-diagonal entries are nonnegative. We have:

$$DF(x)=\left(\begin{array}{cc}
 \begin{array}{cc}0 & 0\\ 0 & 0\end{array}  & \tilde M_{vh}(x)\\
 \tilde M_{hv}(x) &   \begin{array}{cc}-a_1\beta_{hv}\left(1+\frac{\nu_v}{(1-p)\mu_v}\right)\frac{  \sum_{k=1}^2\frac{ p_{k1}\nu_hE_{h,k}}{\mu_h+\gamma_k}}{p_{11}N_{h,1}+p_{21}N_{h,2}}& 0\\ 0 & -a_2\beta_{hv}\left(1+\frac{\nu_v}{(1-p)\mu_v}\right)\frac{  \sum_{k=1}^2\frac{ p_{k2}\nu_hE_{h,k}}{\mu_h+\gamma_k}}{p_{12}N_{h,1}+p_{22}N_{h,2}}\end{array} 

\end{array}\right)
$$ 
where 
\begin{eqnarray}\tilde m_{vh}^{ij}(x)&=&\frac{\beta_{vh}^2\mu_h\nu_va_jp_{ij}N_{h,j}}{(1-p)\mu_v(\mu_h+\nu_h)\sum_{k=1}^2 p_{kj}N_{h,k}} \frac{1}{\frac{\beta_{vh}}{(1-q)\mu_v} \sum_{k=1}^2\frac{ a_kp_{jk}E_{v,k}}{p_{1k}N_{h,1}+p_{2k}N_{h,2}}+\mu_h}   \left[\frac{    }{  } 1 - \right.\nonumber\\
& & \left.\frac{\frac{\beta_{vh}}{(1-q)\mu_v}\sum_{k=1}^2\frac{ a_kp_{jk}E_{v,k}}{p_{1k}N_{h,1}+p_{2k}N_{h,2}} }{\frac{\beta_{vh}}{(1-q)\mu_v} \sum_{k=1}^2\frac{ a_kp_{jk}E_{v,k}}{p_{1k}N_{h,1}+p_{2k}N_{h,2}}+\mu_h}\right]\end{eqnarray}

and
$$\tilde m_{hv}^{ij}(x)=a_i\beta_{hv}\left(N_{v,i}-E_{v,i}-\frac{\nu_v E_{v,i}}{(1-p)\mu_v}\right)\frac{\nu_hp_{ji}}{(\mu_h+\gamma_i)\sum_{k=1}^2p_{ki}N_{h,k}}.
$$
Since, $\tilde m_{vh}^{ij}\geq0$ and $\tilde m_{hv}^{ij}\geq0$ for all $i,j=1,2$, hence all off diagonal entries of the Jacobian matrix are nonnegative and so, the function $F(x)$ is monotone, moreover,

$$DF(0_{\mathbb R^4})=\left(\begin{array}{cc}
 \begin{array}{cc}0 & 0\\ 0 & 0\end{array}  & \tilde M_{vh}(0)\\
 \tilde M_{hv}(0) &  \begin{array}{cc}0 & 0\\ 0 & 0\end{array} 
\end{array}\right).
$$
This matrix is irreducible whenever $\tilde M_{vh}(0)\tilde M_{hv}(0)$ and $\tilde  M_{hv}(0)\tilde M_{vh}(0)$ are irreducible. The latter is guaranteed since $M_{vh}M_{hv}$ and $M_{hv}M_{vh}$ (from the next generation matrix) are both irreducible. Hence, an application of Theorem 2.1 in \cite{hethcote1985stability} implies that Model (\ref{EEl2}) has a unique positive fixed point \textit{if and only if} $\rho(DF(0_{\mathbb R^4}))=\mathcal R_0>1$, or equivalently $\mathcal R_0^2>1$.
\end{proof}

If the host-vector configuration is not irreducible, that is, the graphs associated with the matrices $M_{vh}M_{hv}$ and $M_{hv}M_{vh}$ are not strongly connected, the dynamics of the disease within patches are either somehow independent or System  (\ref{PatchGenFinal}) exhibits boundary equilibria. It is worthwhile noting that the irreducibility of residence times matrix $\mathbb P$ does not imply the irreducibility of $M_{vh}M_{hv}$ and $M_{hv}M_{vh}$. Since the epidemiological and entomological parameters are all positive, the reducibility of the host-vector configuration happens only on the three following cases: (i) If the two patches are isolated, i.e: $p_{12}=p_{21}=0$; (ii) residents of Patch 1 spend all their time in Patch 2 and residents of Patch 2 spend all their time in their own patch, i.e: $p_{12}=1$ and $p_{21}=0$; and (iii) the opposite scenario of (ii).

\section{Simulations}
\label{sec:simula}

Simulations are carried out in order to highlight the effects of residence times on disease dynamics. The simulations have a dual goal, first, to illustrate the theoretical results of this  manuscript and secondly to illustrate the impact of host mobility across high and low-risk dengue areas.  \\ 

The basic reproduction reproduction number $\mathcal R_0(\mathbb P)$ is a function of the residence times matrix $\mathbb P$. Simulation baseline values, except for those involving the entries of $\mathbb P$ are as follows:
$$
\beta_{hv} = 0.5(0.001-0.54),\; \beta_{vh} = 0.41(0.3-0.9),\; \frac{1}{\mu_v}=20(10-30)\; \textrm{days},\; a_1 = 0.95\; \textrm{day}^{-1}, a_2 = 0.8\; \textrm{day}^{-1},
$$
$$
\frac{1}{\mu_h}=60\times 365 \;\textrm{days},\; \frac{1}{\gamma_1}= 7\; \textrm{days},\; \frac{1}{\gamma_2}= 6\;  \textrm{days},\; \frac{1}{\nu_h}= 5\; \textrm{days}, \; \frac{1}{\nu_v}= 7\; \textrm{days}.
$$
The values of the parameters $\nu_h$ and $\nu_v$ are taken from \cite{adams_how_2010,adams2009man}. The infectiousness parameters ($\beta_{hv}$ and $\beta_{vh}$) and vector's natural mortality rate are taken from \cite{chitnis2013modelling}. Host and vector population are $$N_{h,1}=400,000\; N_{h,2}=300,000\; N_{v,1}=35,000,\; N_{v,2}=30,000$$
Patch 1 is the {\it high-risk} and Patch 2 is the {\it low-risk} and so, it is assumed that $a_1>a_2$. Figure \ref{fig:twofigsIh} represents the dynamics of Patch 1 (Fig \ref{Ih1Normal}) and Patch 2 (Fig \ref{Ih2Normal}) infected hosts while Fig \ref{fig:twofigsIv} collects the vector dynamics  in both patches. Since Patch 1 is  high-risk, the number of infected host should decrease as $p_{12}$ increases; see Fig \ref{Ih1Normal}. Figure \ref{Ih2Normal} shows the  Patch 2 infected host population, which it is decreasing, as $p_{21}$ and $p_{12} $ increase. Disease prevalence among Patch 2 residents remains very small when compared to that in Patch 1. In Fig \ref{fig:twofigsIv},  Patch 1 (Fig \ref{Iv1}) and Patch 2 (Fig \ref{Iv2}) vector dynamics are seen to follow the hosts' endemicity pattern.   \\

For all the different values of $p_{ij}$ chosen in Fig  \ref{fig:twofigsIh} and Fig  \ref{fig:twofigsIv}, the host-vector configuration matrix

$$M=\left(\begin{array}{cc}
0 & M_{vh}\\
M_{vh} & 0
\end{array}\right)$$
 or equivalently, the products $M_{hv}M_{vh}$ and $M_{vh}M_{hv}$, are irreducible. Moreover the basic reproduction number $\mathcal R_0$ is greater than one, hence the the disease is, in both patches, at an endemic level.

\begin{figure}[ht]
\centering
 \subfigure[The level of infected host in  Patch 1 seems to decrease as $p_{12}$ increases (and hence $p_{11}$ decreases).]{
   \includegraphics[scale =.43]{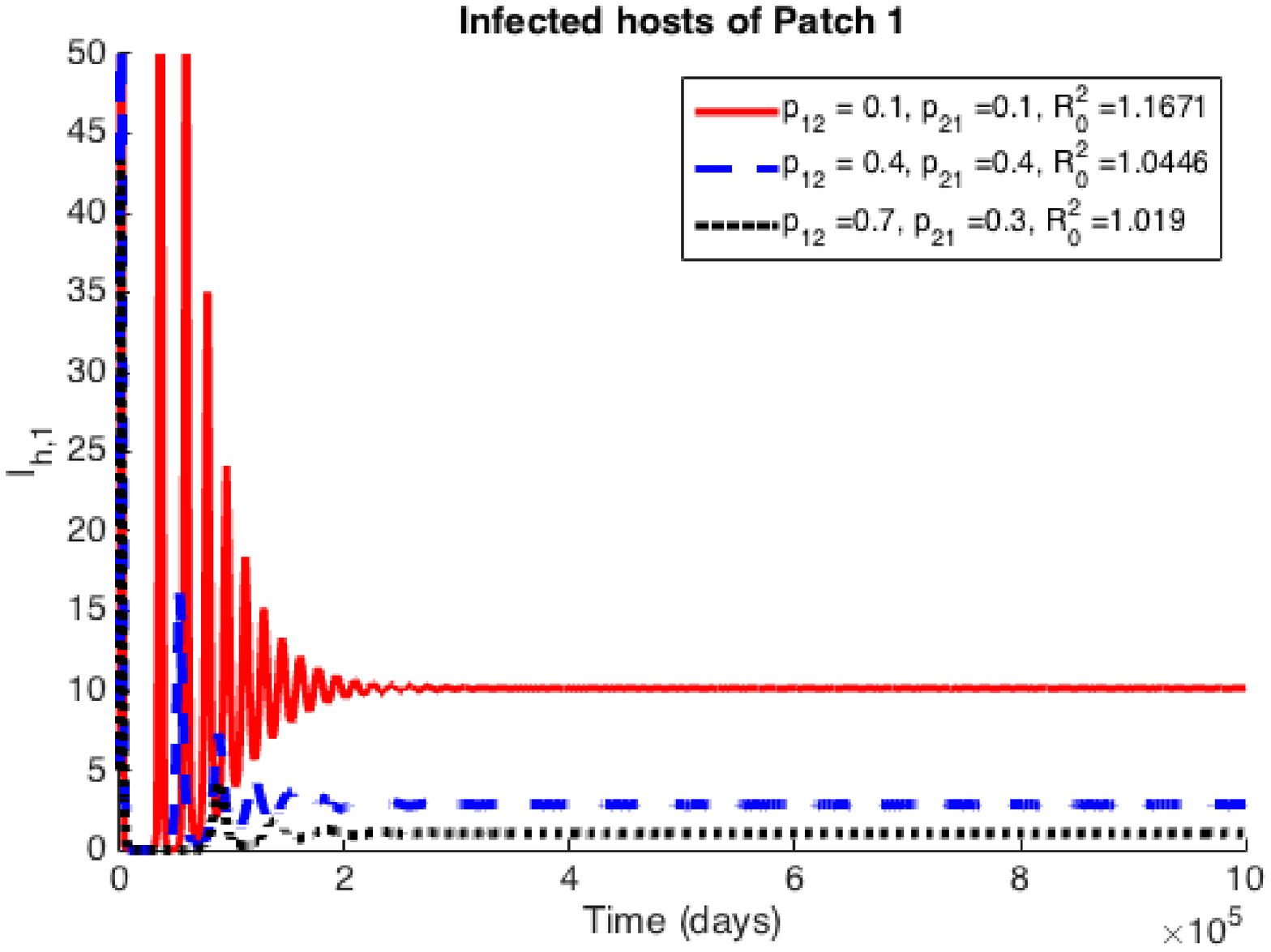}
\label{Ih1Normal}}
\hspace{1mm}
 \subfigure[The level of infected host in Patch 2 seems to decrease with respect to $p_{22}$.]{
   \includegraphics[scale =.43] {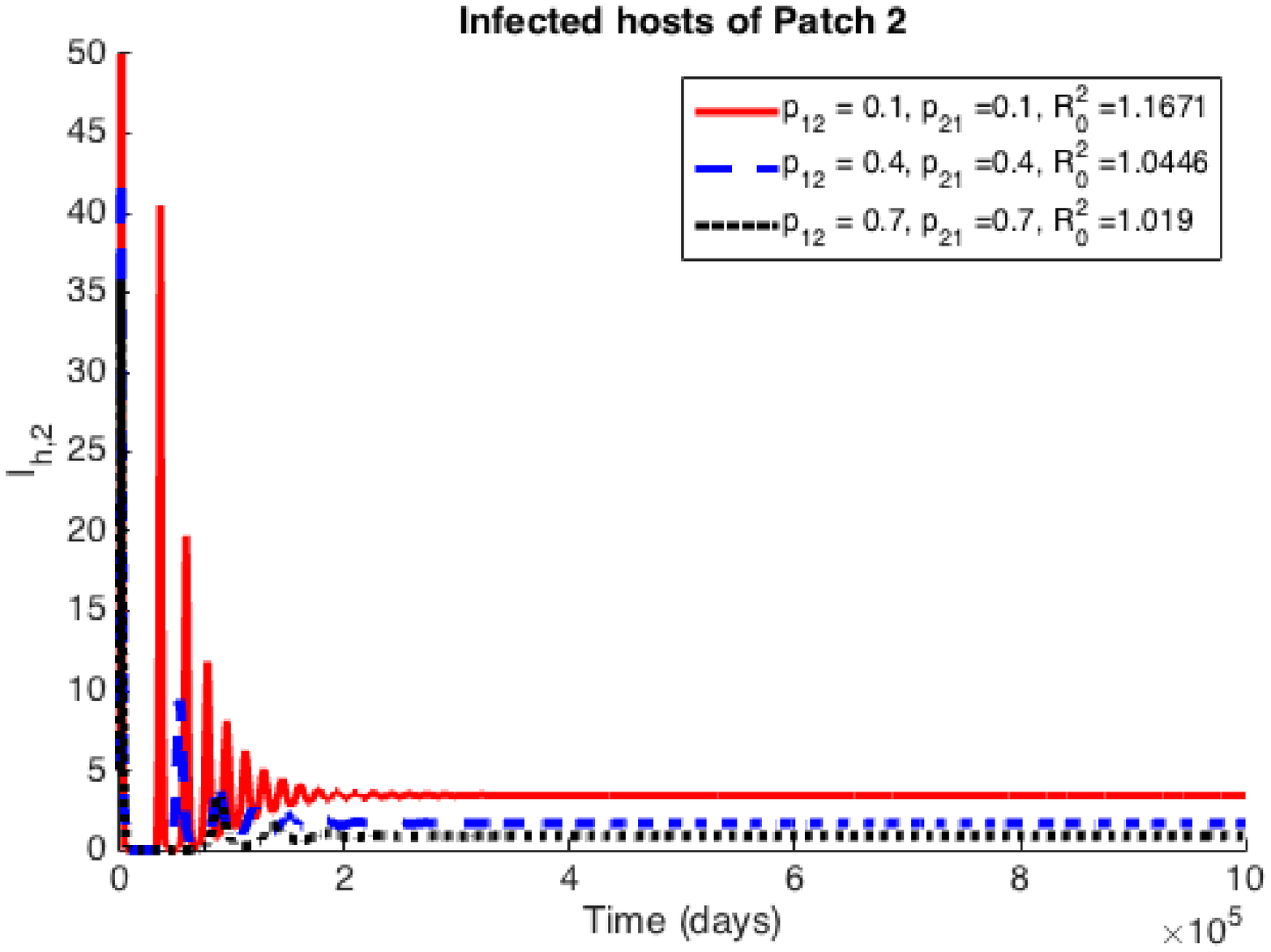}
\label{Ih2Normal}}
\caption{Dynamics of $I_{h,1}$ and $I_{h,2}$ for different values of $p_{ij}$.} \label{fig:twofigsIh}
\end{figure}

\begin{figure}[ht]
\centering
 \subfigure[Asymptotically, the level of infected vectors in  Patch 1 seems to decrease as $p_{12}$ increases (and hence $p_{11}$ decreases).]{
   \includegraphics[scale =.43]{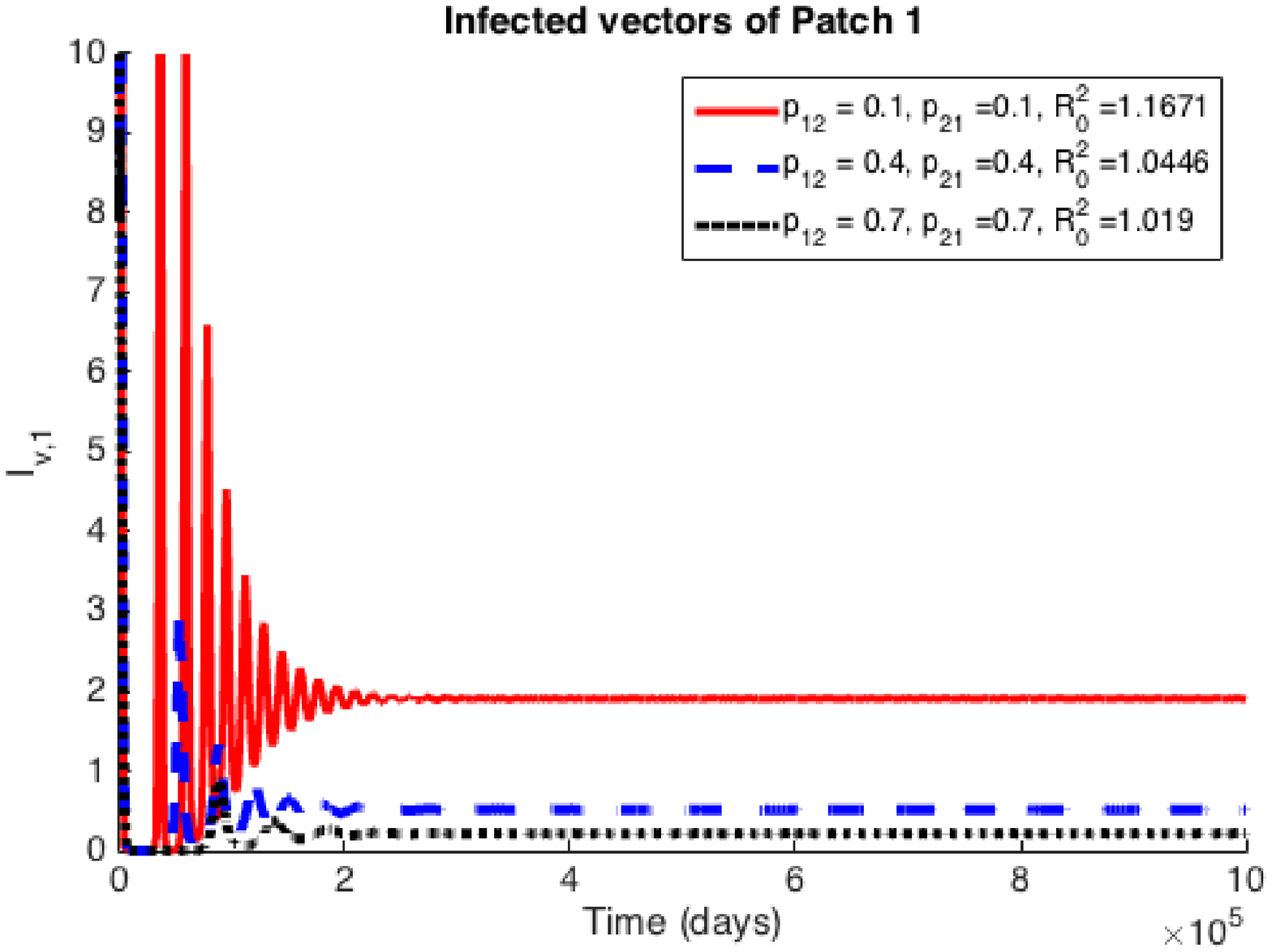}
\label{Iv1}}
\hspace{1mm}
 \subfigure[Asymptotically, the level of infected vectors in Patch 2 seems to decrease with respect to $p_{22}$.]{
   \includegraphics[scale =.43]{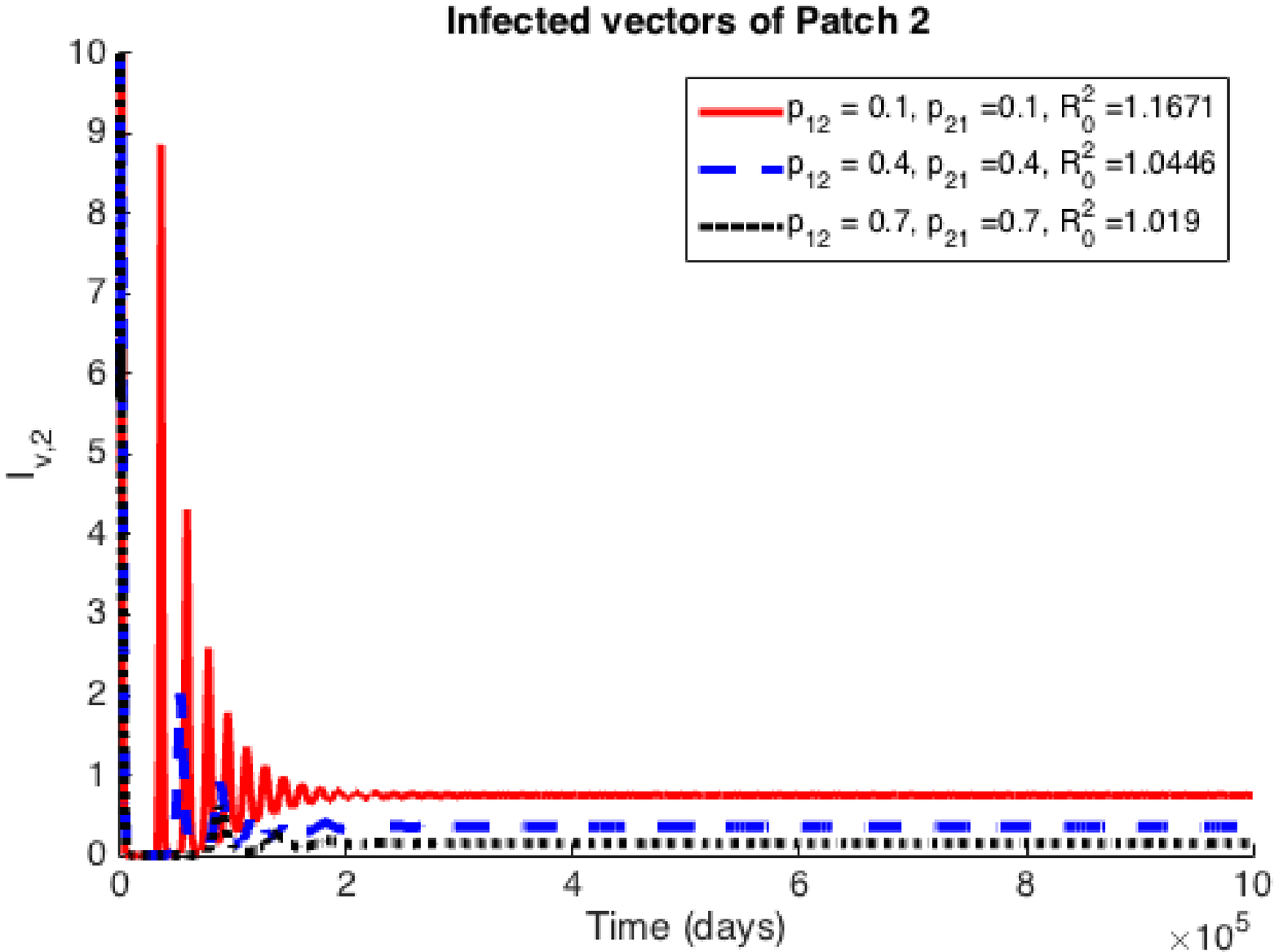}
\label{Iv2}}
\caption{Dynamics of $I_{v,1}$ and $I_{v,2}$ for different values of $p_{ij}$.} \label{fig:twofigsIv}
\end{figure}

Fig \ref{fig:twofigsIhReducible} displays the dynamics of the disease if the host-vector configuration matrix $M$ is not irreducible. The disease dies out in Patch 2 where the basic reproduction number is $\mathcal R_{2,0}^2=0.8161$ and persists in Patch 1 for which $\mathcal R_{1,0}^2=1.1747$.

\begin{figure}[ht]
\centering
 \subfigure{
  \includegraphics[scale =.43]{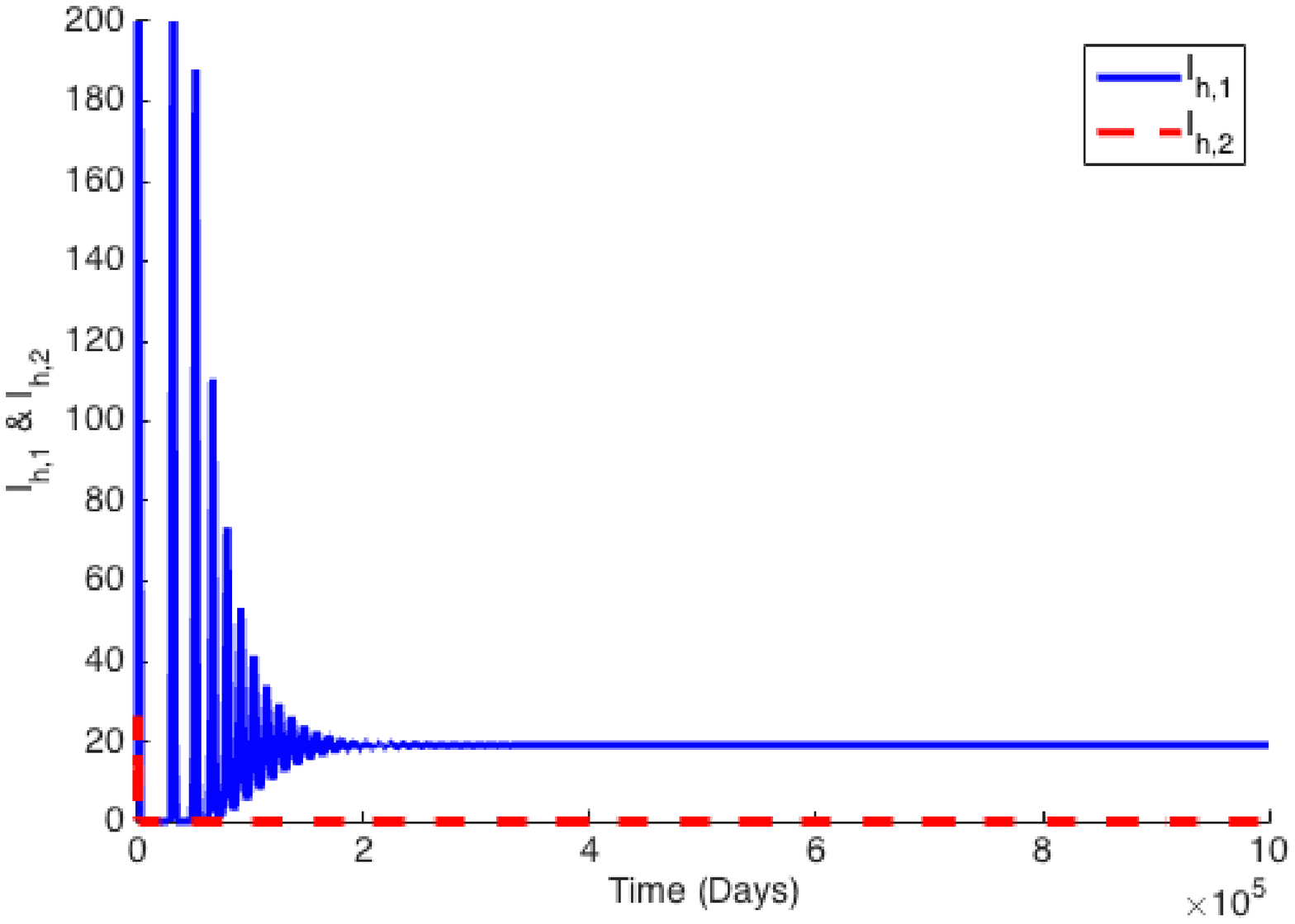}
\label{Ih1Ih2Reducible}}
\hspace{1mm}
 \subfigure{
   \includegraphics[scale =.43]{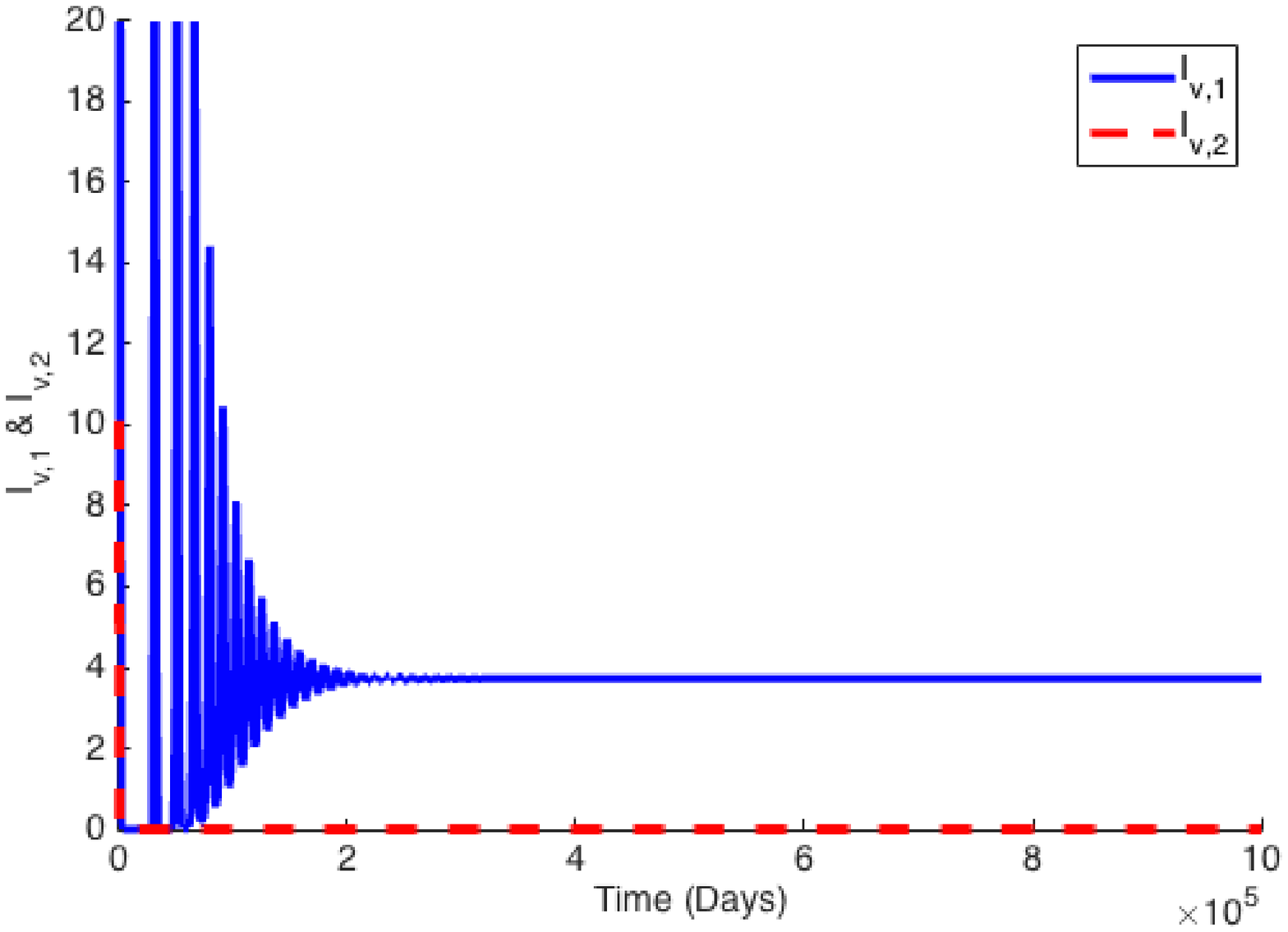}
\label{Iv1Iv2Reducible}}
\caption{Dynamics of host and vectors if the host-vector configuration matrix is reducible.} \label{fig:twofigsIhReducible}
\end{figure}

\section{Colima City and Manzanillo Dengue Inspired Simulation Study}
\label{sec:casestudy}

 \textit{Ae. aegypti} was declared eradicated in  Mexico in 1963. Not surprisingly, all four dengue serotypes (DENV-1, DENV-2, DENV-3 and DENV-4) re-emerged two years after local  the 1963 eradication \cite{diaz2006dengue}. Further, DHF cases have steadily increased since 1994 \cite{navarrete2005clinical}. Dengue is endemic in Mexico with approximately 60\% of  year-round cases reported in southern part of the country; a region is characterized by a warm and humid climate \cite{colon2011climate}. Colima, located on the central Pacific Coast (see Figure~\ref{fig:colima}), is also a reservoir of Dengue. In 2002, the State of Colima  reported 4,040  cases dengue in all  of its 10 municipalities;  495  progressing to DHF \cite{chowell2007clinical,espinoza2005clinical}. DENV-2 was isolated from patients during this outbreak  \cite{espinoza2005clinical}. The increase in DHF cases in Mexico has been linked to the introduction of   DENV-2 Asian, previously isolated in 2000 and again in 2002 \cite{lorono2004introduction}. \\

The dynamics of dengue are explored in the context of this 2002 State of Colima outbreak. The first reported (index) case was identified as that of a 10-year-old female in the municipality of Manzanillo on January 11, 2002. Dengue infection spread throughout the whole state with the most affected municipalities being Colima city, the capital of the state, and Manzanillo, an important tourist destination in the coast \cite{espinoza2005clinical}. The city of Colima reported approximately 1,167 dengue cases, with 169 cases progressing to DHF while Manzanillo, reported 1,334 dengue cases, with 123 progressing to DHF in 2002 \cite{chowell_estimation_2007}. The city of Colima and Manzanillo are  linked via high levels of travel and tourism. Both cities account for approximately 47\% of the state population. We apply a two-patch model to explore the role that movement, modeled via the matrix $p_{ij}$, may have had on dengue disease transmission during this 2002 outbreak. The estimated population of Manzanillo and Colima City were $N_{h,1}=1,355$ and $N_{h,2}=1,184$, respectively, and the initial mosquito populations were choosen to best fit the data. They were approximately 308 and 738 in Manzanillo and Colima City, respectively. Note that the host population is not the actual population of the cities but rather the population at risk in each of the corresponding cities. The population at risk is  much smaller that the actual population because in the same city there are social groups practically disconnected to others by geographic, cultural and social factors.  Entomological parameters were estimated using \cite{yang_follow_2011} and taking into account the mean temperature in each region \cite{chowell_estimation_2007}. The remaining parameters used to study the outbreak in Colima, Mexico were obtained from the literature \cite{adams_how_2010, chitnis2013modelling,garcia-rivera_dengue_2006,yang_follow_2011}:

$$
\beta_{hv}a_1 = 0.43\ \textrm{ days}^{-1},\ \beta_{hv}a_{2}=0.34\  \textrm{days}^{-1},\;  \mu_v=0.036\  \textrm{(Colima)},\ 0.030\ \textrm{(Manzanillo)}\ \textrm{days}^{-1},\;
$$
$$
\frac{1}{\mu_h}= 60 \times 365 \textrm{ days},\; \gamma_1= 0.2 \textrm{ days}^{-1},\; \gamma_2= 0.2 \textrm{ days}^{-1},\; \nu_h = 0.18 \textrm{ days}^{-1},\; \nu_v = 0.1 \textrm{ days}^{-1}.
$$

\begin{figure}[H]
\centering
\includegraphics[scale=0.35]{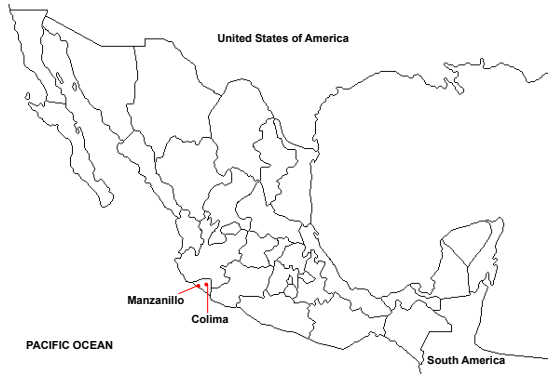}
\caption{The state of Colima is located on the central Pacific coast of Mexico. It has a tropical climate, a surface of 5,455 $km^{2}$, and a population of approximately 488,028 inhabitants. The state of Colima is divided in 10 municipalities. Manzanillo, where the 2002 outbreak began, and Colima City are labeled in the map. $p_{11}=0.99$, $p_{22}=1.0$ with Manzanillo being represented with Patch 1 and Colima with Patch 2. }
\label{fig:colima}
\end{figure}

In order to assess, within our staged scenarios, the impact of migration during the 2002 dengue outbreak, we fit the two-patch model using the incidence data for Manzanillo and the city of Colima reported by the Mexican Social Security Institute (IMSS) during the outbreak (see Figure~\ref{fig:prevdata}). The data fitting for cumulative dengue cases given by the model using `scipy.optimize.curve\_fit' library of python v2.7 programming language, is shown in Figure~\ref{fit}. Model results show that dengue spreads more quickly in the city of Colima when the proportion of visits from Manzanillo's infected residents is high, see the left panel of Figure \ref{fig:experiments} compared with Figure \ref{fit}. Alternatively, susceptible Colima City residents would acquire dengue infections over a longer time frame in Manzanillo, introducing the disease over a slower time scale in  their home residence, the city of Colima. Of course, the absence of movement  leads to no dengue cases in Manzanillo; an outbreak occurring only in Colima $p_{11}=p_{22}=1.0$, see center panel of Figure \ref{fig:experiments}; equal movement, $p_{11}=p_{22}$, would cause the outbreak in Colima to grow faster, as can be seen in the right panel of Figure \ref{fig:experiments}.  Hence, limiting the movement of the Manzanillo population seems like a good strategy while limiting the movement of the Colima population wouldn't be as effective. In the latest scenario, the economic cost would be high since Manzanillo is a tourist destination.

\begin{figure}[H]
\centering
\includegraphics[scale=0.35]{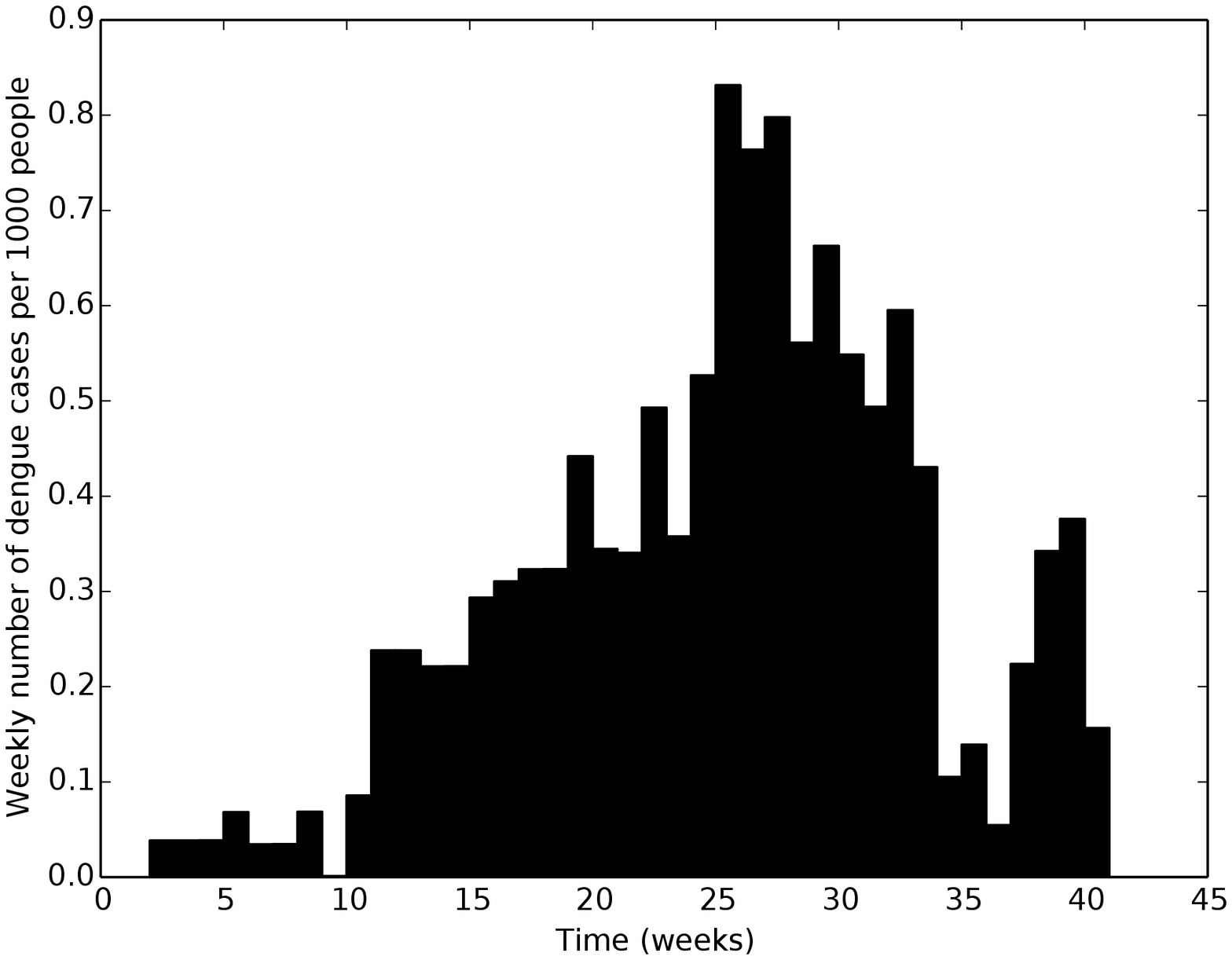}
\includegraphics[scale=0.35]{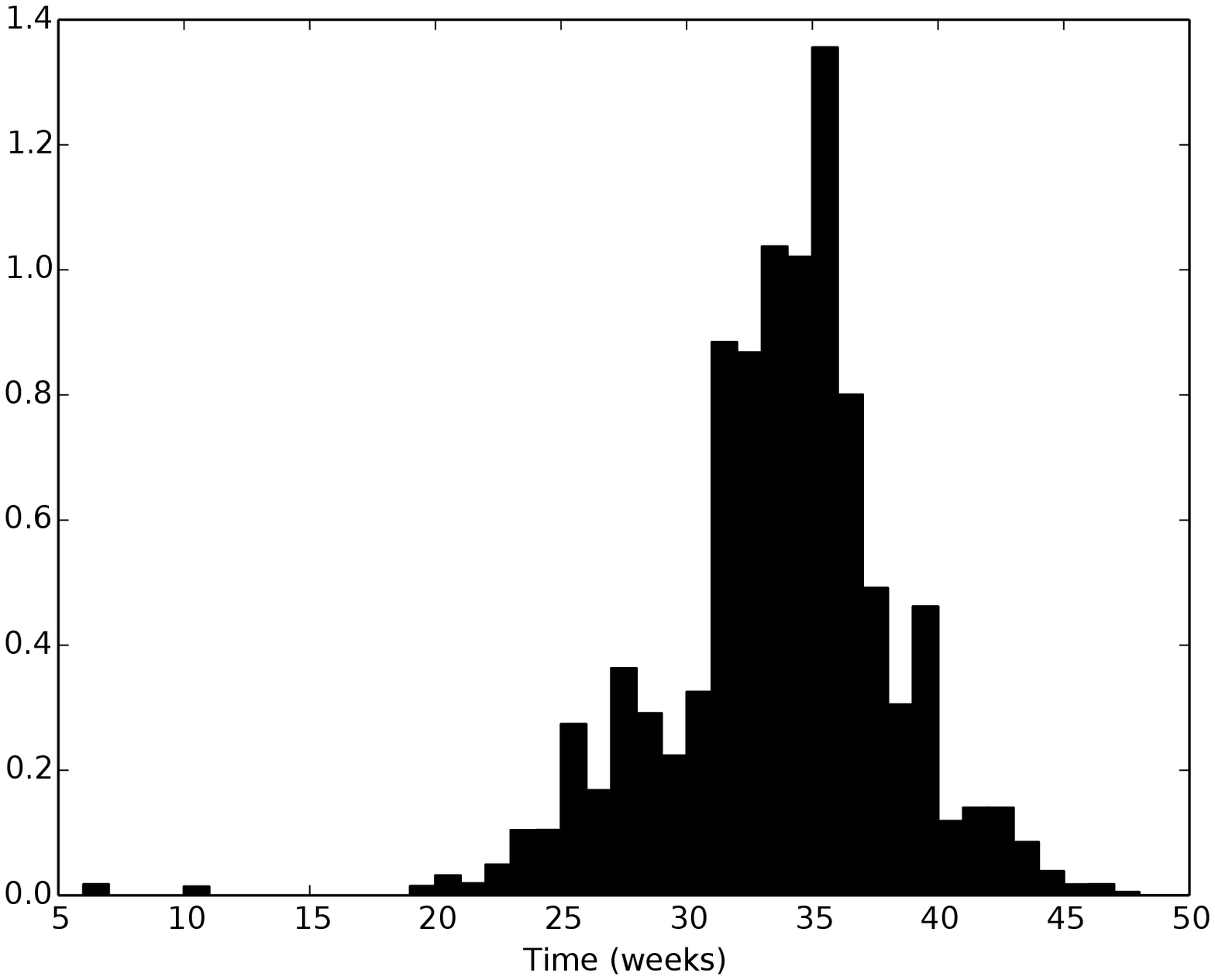}
\caption{ Incidence of dengue cases per weekly during the 2002 dengue epidemic diagnosed at the hospitals of the Mexican Institute of Public Health (IMSS) \cite{chowell_estimation_2007} in Manzanillo (left) and Colima city (right), respectively.}
\label{fig:prevdata}
\end{figure}

\begin{figure}[H]
\centering
\includegraphics[width=0.33\textwidth]{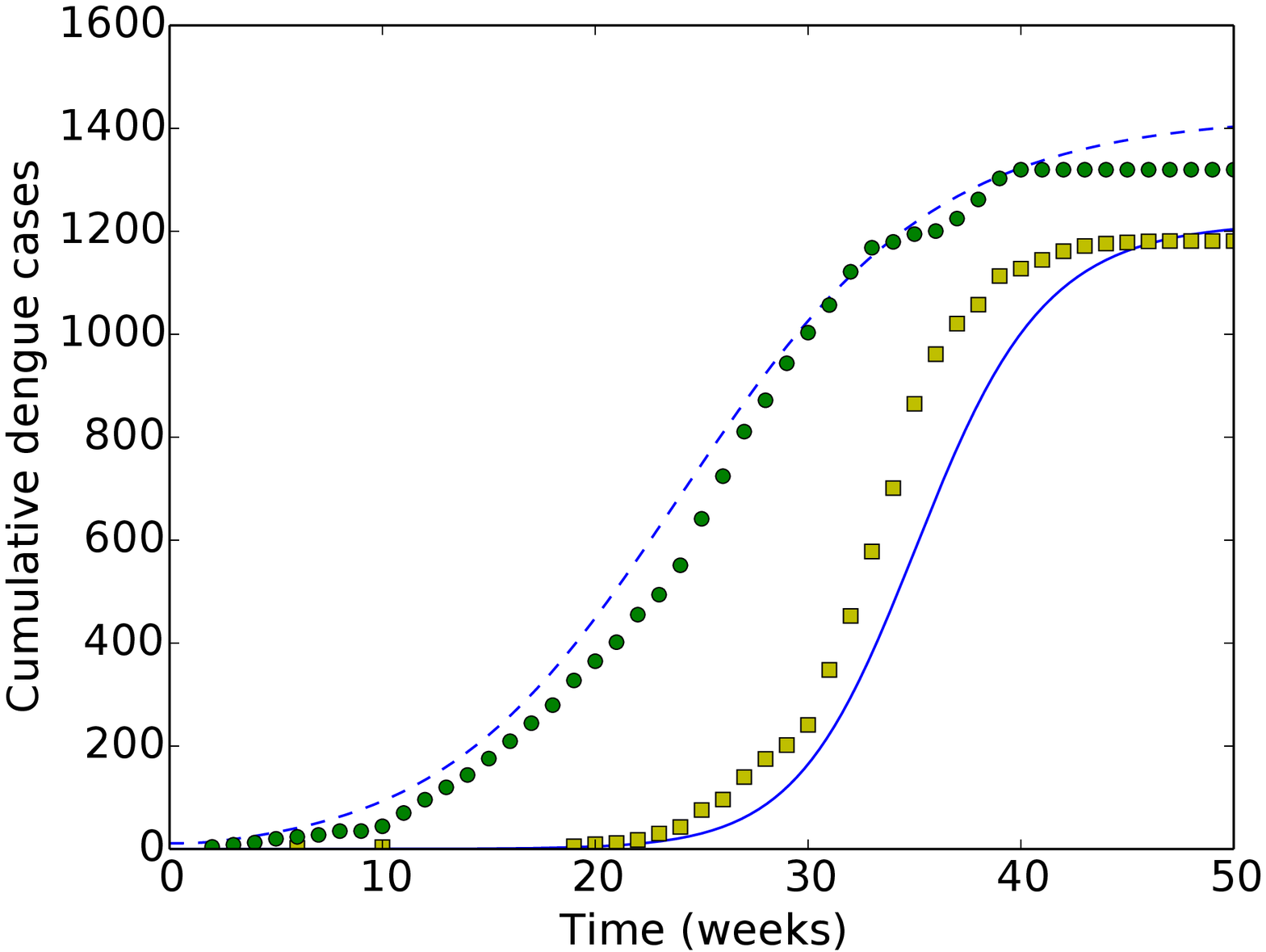}%
\includegraphics[width=0.33\textwidth]{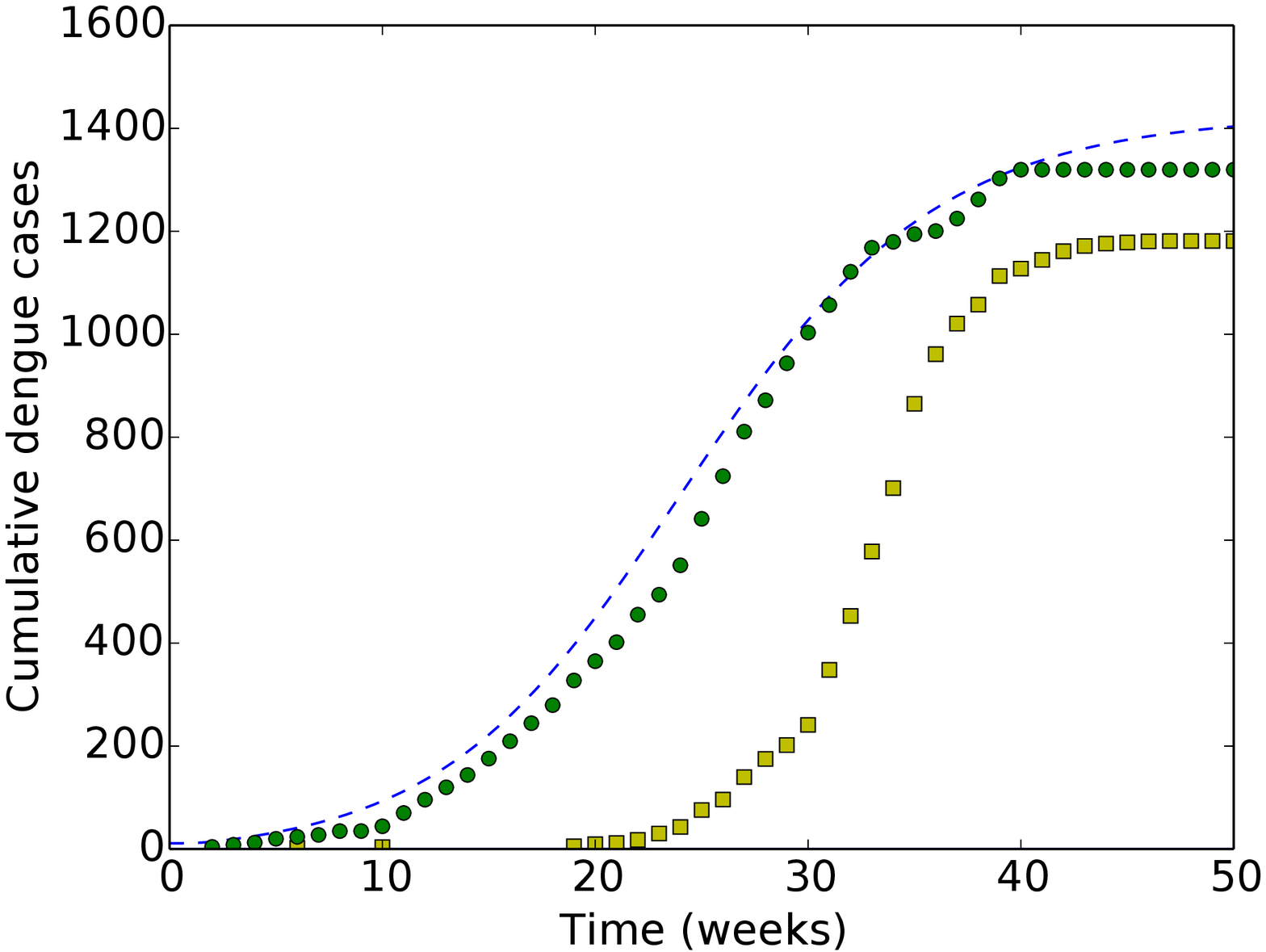}%
\includegraphics[width=0.33\textwidth]{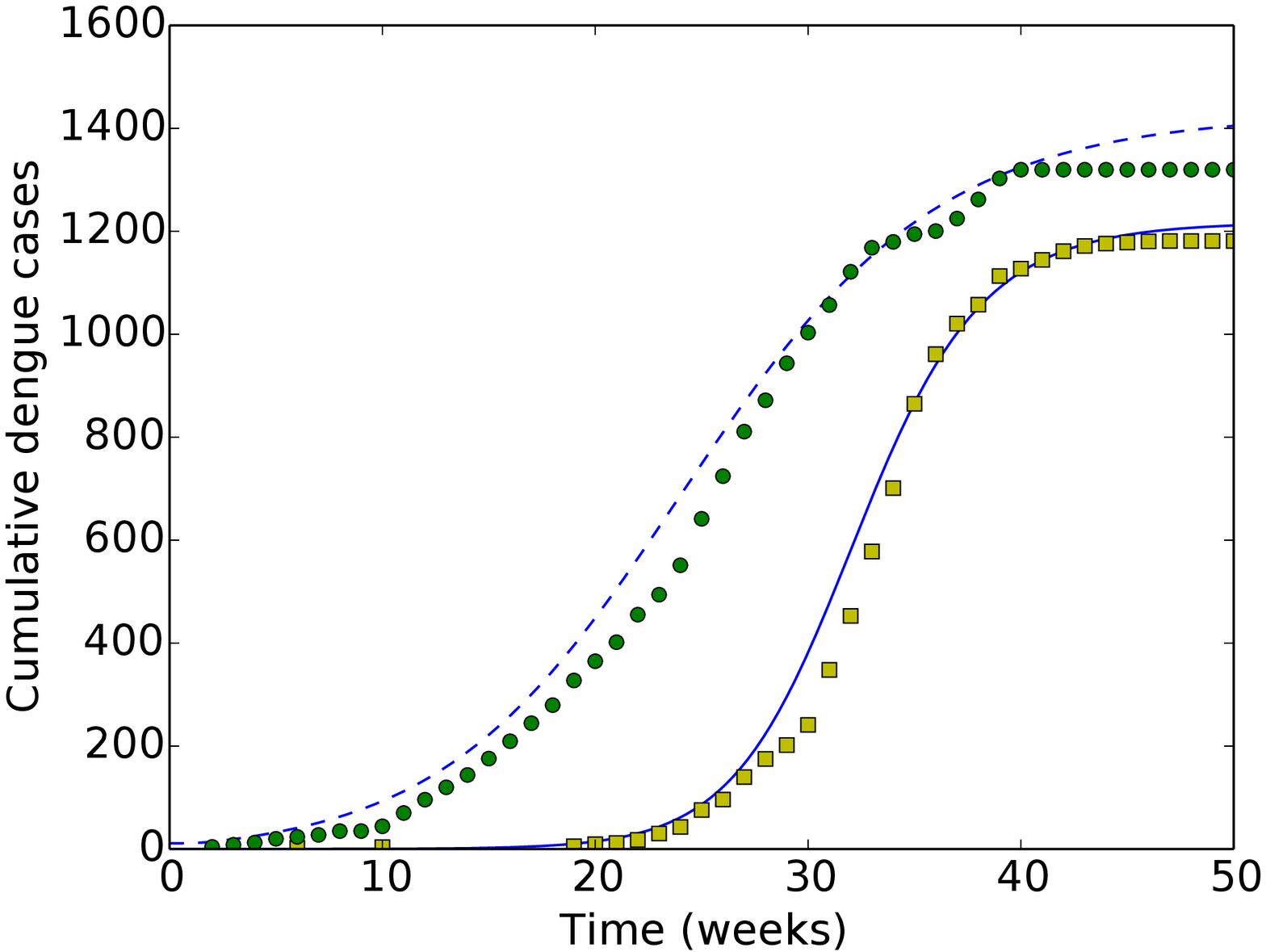}%
\caption{Circles: Cumulative dengue cases reported on Manzanillo, dotted line: Model prediction of Manzanillo cases, squares: Cumulative dengue cases reported on Colima city, solid line: Model prediction of cumulative cases in Colima city. Left: $p_{11}=1.0$, $p_{22}=0.9996$, Center: $p_{11}=1.0$, $p_{22}=1.0$, Right: $p_{11}=p_{22}=0.9996$ Patch 1 represents Manzanillo and patch 2 Colima. }
\label{fig:experiments}
\end{figure}

\begin{figure}[H]
\begin{center}
\includegraphics[scale=0.4]{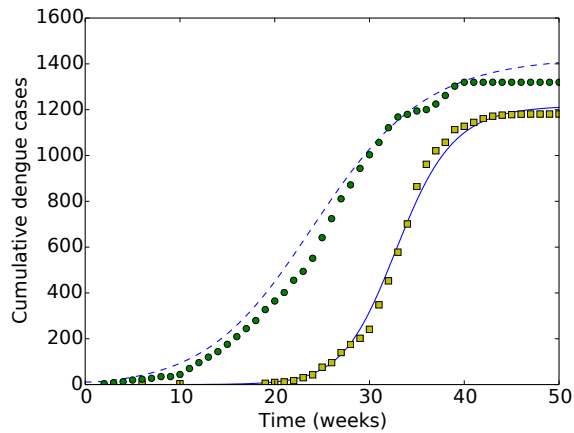}
\caption{Best fit of the model $p_{11}=0.9996$, $p_{22}=1.0$. Circles: Cumulative dengue cases reported on Manzanillo, dotted line: Model prediction of Manzanillo cases, squares: Cumulative dengue cases reported on Colima city, solid line: Model prediction of cumulative cases in Colima city. }
\label{fit}
\end{center}
\end{figure}
%
We can also observe in Figure~\ref{Hipotetical} (on the left), that the effect of reducing the transit from Manzanillo to Colima city led only to a delay in the appearance of the outbreak in Colima. This indicates that the outbreak in Colima followed its own local dynamics and that transit between these two cities only led to delays in the introduction of the dengue virus without affecting the local outbreak dynamics.  When the average visiting time spent in a place where the disease prevalence is low (small value of $p_{ij}$, $i\neq j$) then the only way of reducing an outbreak would require strict migration control, that is,   complete  travel avoidance to the high risk zone. In Figure~\ref{Hipotetical} (on the right), we see that with only a small fraction of visitors from Manzanillo to Colima, the outbreaks in both cities occur almost simultaneously. Model simulations re-affirm the views that the rate of host movement and time spent in endemic geographic regions are important for the spread of dengue between two patches. The question then becomes, why aren't then these residence times estimated?

\begin{figure}[H]
\includegraphics[scale=0.4]{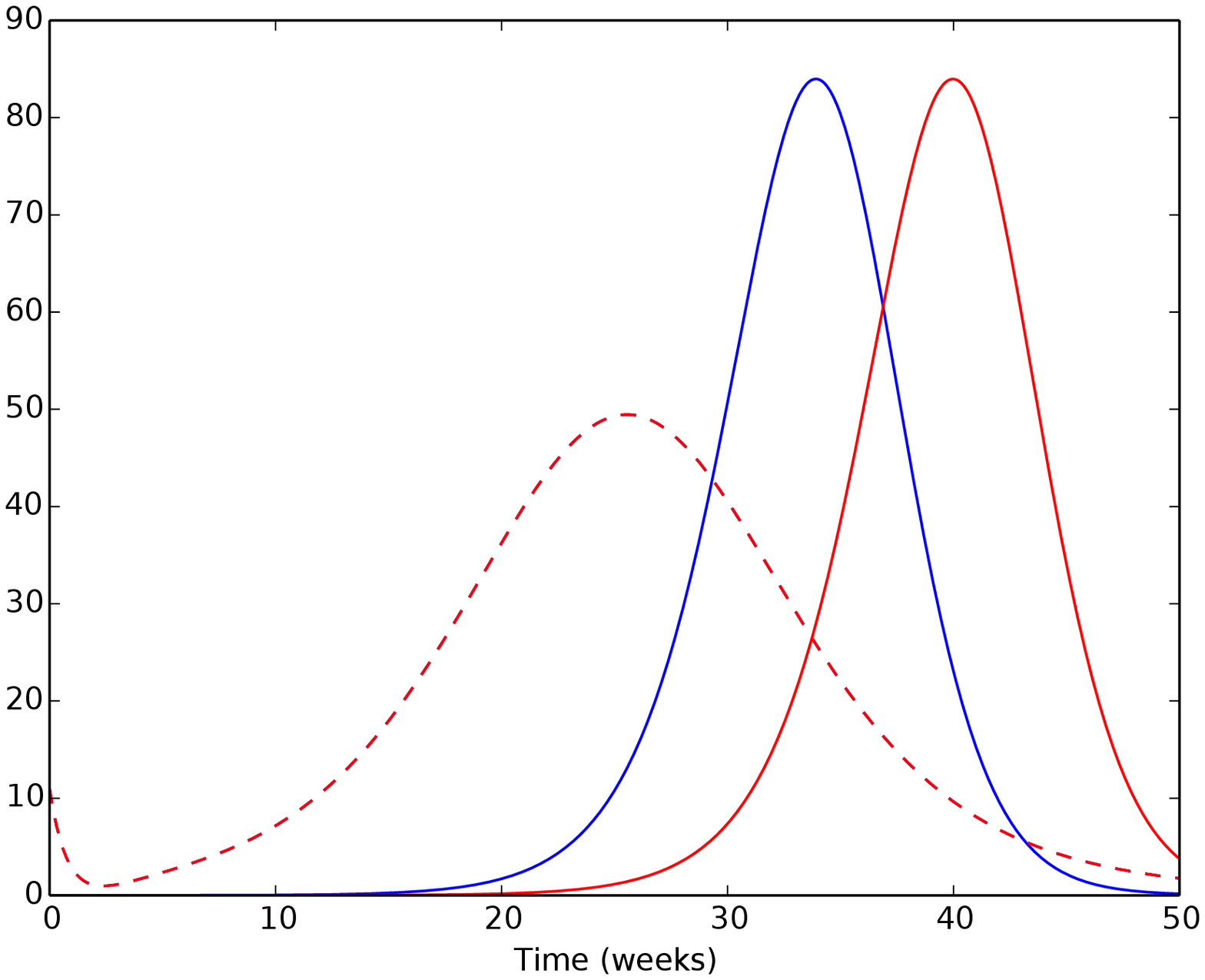}
\includegraphics[scale=0.4]{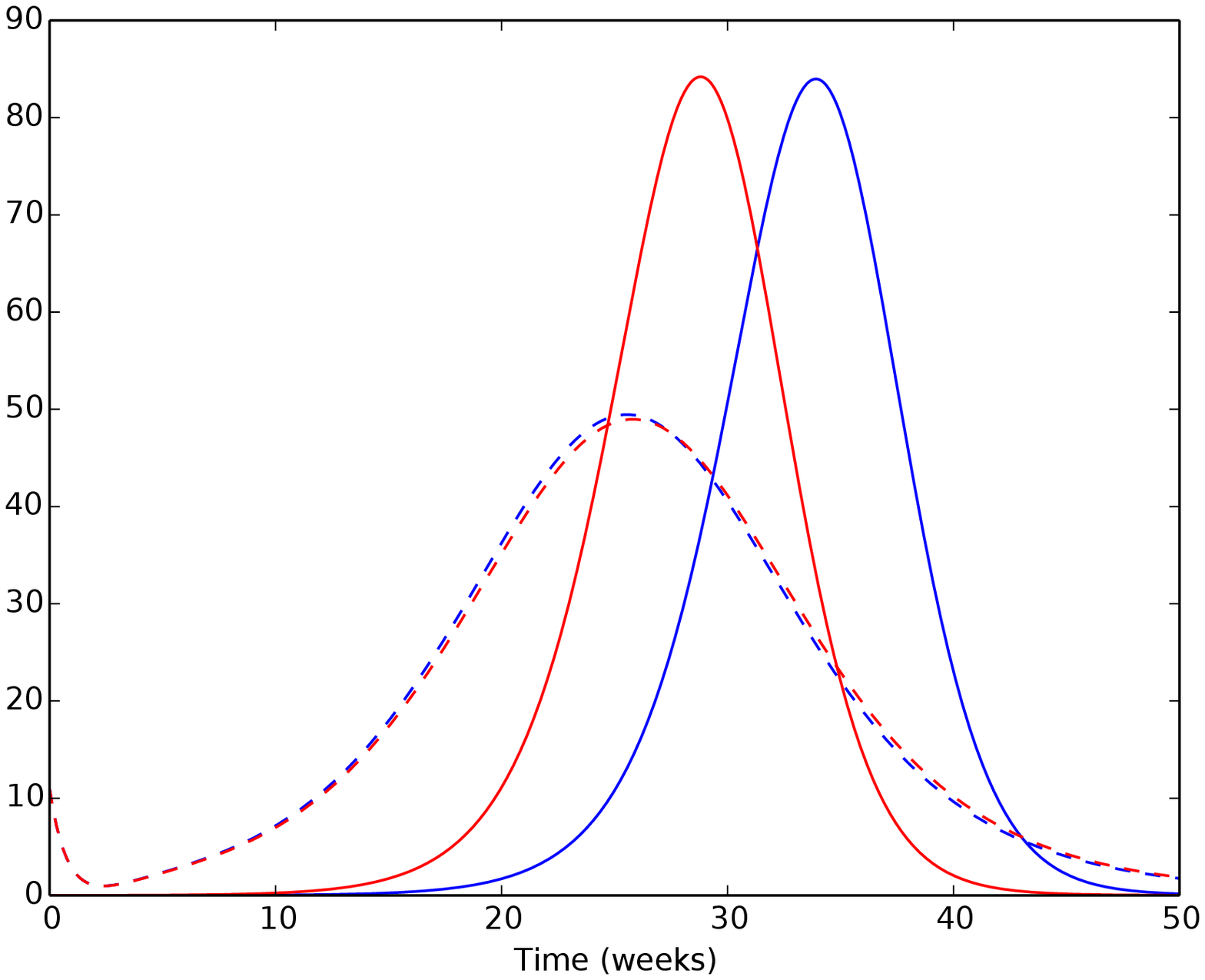}
\caption{Dengue cases predicted by the model for Manzanillo (dotted line) and Colima city (solid line). In the first scenario (on the left), the blue lines represent no transit control and the red lines represent a reduction of 90\% in movement from Manzanillo city to Colima. In the second scenario (on the right), the blue lines represent no movement control and the red lines represent an increment of movement from Colima to Manzanillo city of 1\%.}
\label{Hipotetical}
\end{figure}

\section{Conclusion}
\label{sec:conclusion}
The persistence of vector-bone diseases, such as dengue, is connected to factors that include the presence ecological conditions  that favor high vector densities, vector-host interactions,  the spatial movement of humans, and of course, the effectiveness of control measures \cite{martens2000malaria,sutherst2004global}. In this paper, a two-patch host-vector model was used to study the role of movement on the transmission dynamics of dengue, especially DENV-2.  We focus on the applications of our framework to scenarios where  dengue is endemic and where vertical transmission has been documented. A residence times matrix $\mathbb P$ is used to model host mobility. This modeling approach provides a framework for exploring spatial vector-borne disease dynamics and control within relatively ``close' environments. Analytical results were derived and the conditions for which the disease dies out or persists have been identified;  conditions that depend on whether the basic reproduction number $\mathcal R_0(\mathbb P)$  is less or greater than unity and the connectivity of  patches.\\ 

 Using data from the 2002 DENV-2 outbreak in Colima, Mexico, we compare the overall prevalence in the cities of Colima and Manzanillo as a function of pre-selected $\mathbb P$ matrices. Our model shows that reducing traveling from  to Colima city, considered high-risk and the place of the 2002 outbreak onset, causes a slight delay in the spread of the disease. In order to completely prevent an outbreak in Colima city, migration between Colima city and Manzanillo must be stopped. Manzanillo a tourist destination implies that transit from Colima city to Manzanillo is expected to peak during certain seasons. The model suggests that dengue would become endemic in both patches almost simultaneously. The two-patch model highlights the role of human spatial movement on disease transmission and control. The strength of this effect depends on the proportion of time commuters to high or low risk spend in each patch.


\begin{thebibliography}{10}

\bibitem{adams_how_2010}
{\sc B.~Adams and M.~Boots}, {\em How important is vertical transmission in
  mosquitoes for the persistence of dengue? {Insights} from a mathematical
  model}, Epidemics, 2 (2010), pp.~1--10.

\bibitem{adams2009man}
{\sc B.~Adams and D.~D. Kapan}, {\em Man bites mosquito: understanding the
  contribution of human movement to vector-borne disease dynamics},  (2009).

\bibitem{arunachalam2008natural}
{\sc N.~Arunachalam, S.~Tewari, V.~Thenmozhi, R.~Rajendran, R.~Paramasivan,
  R.~Manavalan, K.~Ayanar, and B.~Tyagi}, {\em {Natural vertical transmission
  of dengue viruses by Aedes aegypti in Chennai, Tamil Nadu, India}}, Indian J
  Med Res, 127 (2008), pp.~395--397.

\bibitem{auger2008ross}
{\sc P.~Auger, E.~Kouokam, G.~Sallet, M.~Tchuente, and B.~Tsanou}, {\em The
  ross--macdonald model in a patchy environment}, Mathematical biosciences, 216
  (2008), pp.~123--131.

\bibitem{balmaseda2006serotype}
{\sc A.~Balmaseda, S.~Hammond, L.~Perez, Y.~Tellez, S.~Saborio, J.~Mercado,
  R.~Cuadra, J.~Rocha, M.~Perez, S.~Silva, et~al.}, {\em {Serotype-specific
  differences in clinical manifestations of dengue}}, The American journal of
  tropical medicine and hygiene, 74 (2006), p.~449.

\bibitem{berman1979nonnegative}
{\sc A.~Berman and R.~J. Plemmons}, {\em Nonnegative matrices}, The
  Mathematical Sciences, Classics in Applied Mathematics, 9 (1979).

\bibitem{bichara2015vector}
{\sc D.~Bichara and C.~Castillo-Chavez}, {\em Vector-borne diseases models with
  residence times-a lagrangian perspective}, arXiv preprint arXiv:1509.08894,
  (2015).

\bibitem{bichara2015sis}
{\sc D.~Bichara, Y.~Kang, C.~Castillo-Chavez, R.~Horan, and C.~Perrings}, {\em
  Sis and sir epidemic models under virtual dispersal}, Bulletin of
  mathematical biology, 77 (2015), pp.~2004--2034.

\bibitem{bonnet1946dispersal}
{\sc D.~D. Bonnet and D.~J. Worcester}, {\em The dispersal of aedes albopictus
  in the territory of hawaii}, The American journal of tropical medicine and
  hygiene, 1 (1946), pp.~465--476.

\bibitem{bosio1992variation}
{\sc C.~Bosio, R.~Thomas, P.~Grimstad, and K.~Rai}, {\em {Variation in the
  efficiency of vertical transmission of dengue-1 virus by strains of Aedes
  albopictus (Diptera: Culicidae)}}, Journal of medical entomology, 29 (1992),
  pp.~985--989.

\bibitem{brauer2011mathematical}
{\sc F.~Brauer and C.~Castillo-Ch{\'a}vez}, {\em Mathematical models in
  population biology and epidemiology}, vol.~40 of Texts in applied
  mathematics, Springer, 2012.

\bibitem{burke1988prospective}
{\sc D.~Burke, A.~Nisalak, D.~Johnson, and R.~Scott}, {\em {A prospective study
  of dengue infections in Bangkok}}, The American journal of tropical medicine
  and hygiene, 38 (1988), p.~172.

\bibitem{cccsanchez2011}
{\sc C.~Castillo-Ch{\'a}vez, F.~Sanchez, and D.~Murillo}, {\em Change in host
  behavior and its impact on the transmission dynamics of dengue}, BIOMAT
  Conference: International Symposium on Mathematical and Computational
  Biology, 9 (2011), pp.~1--13.

\bibitem{CasThi95}
{\sc C.~Castillo-Chavez and H.~R. Thieme}, {\em Asymptotically autonomous
  epidemic models}, in Mathematical Population Dynamics: Analysis of
  Heterogeneity, Volume One: Theory of Epidemics,, O.~Arino, A.~D.E., and
  M.~Kimmel, eds., Wuerz, 1995.

\bibitem{CDC2010}
{\sc CDC}, {\em {Report Suggests Nearly 5 Percent Exposed to Dengue Virus in
  Key West}}, 2010.

\bibitem{cecilio2009natural}
{\sc A.~Cec{\'\i}lio, E.~Campanelli, K.~Souza, L.~Figueiredo, and M.~Resende},
  {\em {Natural vertical transmission by Stegomyia albopicta as dengue vector
  in Brazil}}, Brazilian Journal of Biology, 69 (2009), pp.~123--127.

\bibitem{chitnis2013modelling}
{\sc N.~Chitnis, J.~M. Hyman, and C.~A. Manore}, {\em Modelling vertical
  transmission in vector-borne diseases with applications to rift valley
  fever}, Journal of biological dynamics, 7 (2013), pp.~11--40.

\bibitem{chowell_estimation_2007}
{\sc G.~Chowell, P.~Diaz-Due\~nas, J.~Miller, A.~Alcazar-Velazco, J.~Hyman,
  P.~Fenimore, and C.~Castillo-Chavez}, {\em Estimation of the reproduction
  number of dengue fever from spatial epidemic data}, Mathematical Biosciences,
  208 (2007), pp.~571--589.

\bibitem{chowell2007clinical}
{\sc G.~Chowell, P.~Diaz-Duenas, D.~Chowell, S.~Hews, G.~Ceja-Espiritu,
  J.~M~Hyman, and C.~Castillo-Chavez}, {\em Clinical diagnosis delays and
  epidemiology of dengue fever during the 2002 outbreak in colima, mexico.},
  (2007).

\bibitem{chowell2006climate}
{\sc G.~Chowell and F.~Sanchez}, {\em Climate-based descriptive models of
  dengue fever: the 2002 epidemic in colima, mexico.}, Journal of environmental
  health, 68 (2006), pp.~40--4.

\bibitem{colon2011climate}
{\sc F.~J. Col{\'o}n-Gonz{\'a}lez, I.~R. Lake, and G.~Bentham}, {\em Climate
  variability and dengue fever in warm and humid mexico}, The American journal
  of tropical medicine and hygiene, 84 (2011), pp.~757--763.

\bibitem{deubel1988nucleotide}
{\sc V.~Deubel, R.~M. Kinney, and D.~W. Trent}, {\em Nucleotide sequence and
  deduced amino acid sequence of the nonstructural proteins of dengue type 2
  virus, jamaica genotype: comparative analysis of the full-length genome},
  Virology, 165 (1988), pp.~234--244.

\bibitem{diaz2006dengue}
{\sc F.~J. D{\'\i}az, W.~C. Black, J.~A. Farf{\'a}n-Ale, M.~A. Loro{\~n}o-Pino,
  K.~E. Olson, and B.~J. Beaty}, {\em Dengue virus circulation and evolution in
  mexico: a phylogenetic perspective}, Archives of medical research, 37 (2006),
  pp.~760--773.

\bibitem{MR1057044}
{\sc O.~Diekmann, J.~A.~P. Heesterbeek, and J.~A.~J. Metz}, {\em On the
  definition and the computation of the basic reproduction ratio {$R\sb 0$} in
  models for infectious diseases in heterogeneous populations}, J. Math. Biol.,
  28 (1990), pp.~365--382.

\bibitem{espinoza2005clinical}
{\sc F.~Espinoza-G{\'o}mez, P.~D{\'\i}az-Due{\~n}as, C.~Torres-Lepe, R.~A.
  Cedillo-Nakay, and O.~A. Newton-S{\'a}nchez}, {\em Clinical pattern of
  hospitalized patients during a dengue epidemic in colima, mexico},  (2005).

\bibitem{esteva2000influence}
{\sc L.~Esteva and C.~Vargas}, {\em Influence of vertical and mechanical
  transmission on the dynamics of dengue disease}, Mathematical Biosciences,
  167 (2000), pp.~51--64.

\bibitem{forouzannia2014mathematical}
{\sc F.~Forouzannia and A.~B. Gumel}, {\em Mathematical analysis of an
  age-structured model for malaria transmission dynamics}, Mathematical
  biosciences, 247 (2014), pp.~80--94.

\bibitem{gao2012multipatch}
{\sc D.~Gao and S.~Ruan}, {\em A multipatch malaria model with logistic growth
  populations}, SIAM journal on applied mathematics, 72 (2012), pp.~819--841.

\bibitem{garcia-rivera_dengue_2006}
{\sc E.~J. Garc\'\i~a Rivera and J.~e.~G. Rigau-P\'~erez}, {\em Dengue virus},
  in Congenital and {Perinatal} {Infections}, Springer, 2006, pp.~187--197.

\bibitem{gumel2006mathematical}
{\sc A.~Gumel, C.~Castillo-Chavez, R.~Mickens, and D.~Clemence}, {\em
  {Mathematical studies on human disease dynamics: emerging paradigms and
  challenges}}, American Mathematical Society, 2006.

\bibitem{gunther2007evidence}
{\sc J.~Gunther, J.~Mart{\'\i}nez-Mu{\~n}oz, D.~P{\'e}rez-Ishiwara, and
  J.~Salas-Benito}, {\em {Evidence of vertical transmission of dengue virus in
  two endemic localities in the state of Oaxaca, Mexico}}, Intervirology, 50
  (2007), pp.~347--352.

\bibitem{guzman2003dengue}
{\sc G.~Guzman, G.~Kouri, et~al.}, {\em Dengue and dengue hemorrhagic fever in
  the americas: lessons and challenges}, Journal of Clinical Virology, 27
  (2003), pp.~1--13.

\bibitem{halstead1970observations}
{\sc S.~Halstead, S.~Nimmannitya, and S.~Cohen}, {\em {Observations related to
  pathogenesis of dengue hemorrhagic fever. IV. Relation of disease severity to
  antibody response and virus recovered.}}, The Yale Journal of Biology and
  Medicine, 42 (1970), p.~311.

\bibitem{halstead2002dengue}
{\sc S.~B. Halstead, N.~T. Lan, T.~T. Myint, T.~N. Shwe, A.~Nisalak,
  S.~Kalyanarooj, S.~Nimmannitya, S.~Soegijanto, D.~W. Vaughn, and T.~P. Endy},
  {\em Dengue hemorrhagic fever in infants: research opportunities ignored},
  Emerging infectious diseases, 8 (2002), pp.~1474--1479.

\bibitem{harris2000clinical}
{\sc E.~Harris, E.~Videa, L.~Perez, E.~Sandoval, Y.~Tellez, M.~Perez,
  R.~Cuadra, J.~Rocha, W.~Idiaquez, R.~Alonso, et~al.}, {\em {Clinical,
  epidemiologic, and virologic features of dengue in the 1998 epidemic in
  Nicaragua}}, The American journal of tropical medicine and hygiene, 63
  (2000), p.~5.

\bibitem{hawley1987aedes}
{\sc W.~Hawley, P.~Reiter, R.~Copeland, C.~Pumpuni, and G.~Craig~Jr}, {\em
  {Aedes albopictus in North America: probable introduction in used tires from
  northern Asia}}, Science, 236 (1987), p.~1114.

\bibitem{hethcote1985stability}
{\sc H.~W. Hethcote and H.~R. Thieme}, {\em Stability of the endemic
  equilibrium in epidemic models with subpopulations}, Mathematical
  Biosciences, 75 (1985), pp.~205--227.

\bibitem{iggidr2014dynamics}
{\sc A.~Iggidr, G.~Sallet, and M.~O. Souza}, {\em On the dynamics of a class of
  multi-group models for vector-borne diseases}, arXiv preprint
  arXiv:1409.1470,  (2014).

\bibitem{knox2003enhanced}
{\sc T.~Knox, B.~Kay, R.~Hall, and P.~Ryan}, {\em {Enhanced vector competence
  of Aedes aegypti (Diptera: Culicidae) from the Torres Strait compared with
  mainland Australia for dengue 2 and 4 viruses}}, Journal of medical
  entomology, 40 (2003), pp.~950--956.

\bibitem{kwok2010}
{\sc Y.~Kwok}, {\em {Dengue Fever Cases Reach Record Highs, in \emph{Time}}},
  2010.

\bibitem{kyle2008global}
{\sc J.~Kyle and E.~Harris}, {\em {Global spread and persistence of dengue}},
  Annual Review of Microbiology,  (2008).

\bibitem{lewis1993phylogenetic}
{\sc J.~Lewis, G.~Chang, R.~Lanciotti, R.~Kinney, L.~Mayer, and D.~Trent}, {\em
  {Phylogenetic relationships of dengue-2 viruses}}, Virology(New York, NY),
  197 (1993), pp.~216--224.

\bibitem{lorono2004introduction}
{\sc M.~A. Lorono-Pino, J.~A. Farfan-Ale, A.~L. Zapata-Peraza, E.~P.
  Rosado-Paredes, L.~F. Flores-Flores, J.~E. Gard{\'i}a-Rej{\'o}n, F.~J.
  D{\'i}az, B.~J. Blitvich, M.~Andrade-Narvaez, E.~Jimenez-Rios, et~al.}, {\em
  Introduction of the american/asian genotype of dengue 2 virus into the
  yucatan state of mexico}, The American journal of tropical medicine and
  hygiene, 71 (2004), pp.~485--492.

\bibitem{Lotka1923}
{\sc A.~Lotka}, {\em contribution to the analysis of malaria epidemiology}, Am.
  J. Trop. Med. Hyg., 3 (1923), pp.~1--121.

\bibitem{macdonald1952analysis}
{\sc G.~Macdonald}, {\em The analysis of equilibrium in malaria.}, Tropical
  diseases bulletin, 49 (1952), pp.~813--829.

\bibitem{martens2000malaria}
{\sc P.~Martens and L.~Hall}, {\em Malaria on the move: human population
  movement and malaria transmission.}, Emerging infectious diseases, 6 (2000),
  p.~103.

\bibitem{martins2012occurrence}
{\sc V.~Martins, C.~H. Alencar, M.~T. Kamimura, F.~M. de~Carvalho~Araujo, S.~G.
  De~Simone, R.~F. Dutra, and M.~Guedes}, {\em Occurrence of natural vertical
  transmission of dengue-2 and dengue-3 viruses in aedes aegypti and aedes
  albopictus in fortaleza, cear{\'a}, brazil}, PloS one, 7 (2012), p.~e41386.

\bibitem{mckenzie2004role}
{\sc F.~McKenzie and E.~Samba}, {\em {The role of mathematical modeling in
  evidence-based malaria control}}, The American journal of tropical medicine
  and hygiene, 71 (2004), p.~94.

\bibitem{montoya-circulation}
{\sc Y.~Montoya, S.~Holechek, O.~C{\'a}ceres, A.~Palacios, J.~Burans,
  C.~Guevara, F.~Quintana, V.~Herrera, E.~Pozo, E.~Anaya, et~al.}, {\em
  {Circulation of dengue viruses in North-Western Peru, 2000-2001.}}, Dengue
  Bulletin, 27 (2003).

\bibitem{morens2008dengue}
{\sc D.~Morens and A.~Fauci}, {\em {Dengue and hemorrhagic fever: A potential
  threat to public health in the United States}}, JAMA, 299 (2008), p.~214.

\bibitem{murillo2014vertical}
{\sc D.~Murillo, S.~A. Holechek, A.~L. Murillo, F.~Sanchez, and
  C.~Castillo-Chavez}, {\em Vertical transmission in a two-strain model of
  dengue fever}, Letters in Biomathematics, 1 (2014), pp.~249--271.

\bibitem{navarrete2005clinical}
{\sc J.~Navarrete-Espinosa, H.~G{\'o}mez-Dant{\'e}s, J.~Germ{\'a}n
  Celis-Quintal, and J.~L. V{\'a}zquez-Mart{\'\i}nez}, {\em Clinical profile of
  dengue hemorrhagic fever cases in mexico}, salud p{\'u}blica de m{\'e}xico,
  47 (2005), pp.~193--200.

\bibitem{ngwa2010mathematical}
{\sc G.~A. Ngwa, A.~M. Niger, and A.~B. Gumel}, {\em Mathematical assessment of
  the role of non-linear birth and maturation delay in the population dynamics
  of the malaria vector}, Applied Mathematics and Computation, 217 (2010),
  pp.~3286--3313.

\bibitem{niebylski1994dispersal}
{\sc M.~Niebylski and G.~Craig~Jr}, {\em Dispersal and survival of aedes
  albopictus at a scrap tire yard in missouri.}, Journal of the American
  Mosquito Control Association, 10 (1994), pp.~339--343.

\bibitem{niger2008mathematical}
{\sc A.~M. Niger and A.~B. Gumel}, {\em Mathematical analysis of the role of
  repeated exposure on malaria transmission dynamics}, Differential Equations
  and Dynamical Systems, 16 (2008), pp.~251--287.

\bibitem{nishiura2006mathematical}
{\sc H.~Nishiura}, {\em {Mathematical and statistical analyses of the spread of
  dengue}}, Dengue Bulletin, 30 (2006), p.~51.

\bibitem{Hawaii2015}
{\sc S.~of~Hawaii}, {\em {Dengue outbreak 2015}}.
\newblock \url{http://www.health.hawaii.gov/docd/dengue-outbreak-2015/}.
\newblock Accessed: December 30, 2015.

\bibitem{WHODengueMosquitoes}
{\sc W.~H. Organization}, {\em Dengue control}.
\newblock \url{http://www.who.int/denguecontrol/mosquito/en/}.
\newblock Accessed: September 23, 2015.

\bibitem{reiter1997surveillance}
{\sc {Reiter, P. and Gubler, D.J.}}, {\em Surveillance and control of urban
  dengue vectors}, Dengue and dengue hemorragic fever. New York: CAB
  International,  (1997), pp.~45--60.

\bibitem{rico1998molecular}
{\sc R.~Rico-Hesse, L.~Harrison, A.~Nisalak, D.~Vaughn, S.~Kalayanarooj,
  S.~Green, A.~Rothman, and F.~Ennis}, {\em {Molecular evolution of dengue type
  2 virus in Thailand}}, The American journal of tropical medicine and hygiene,
  58 (1998), p.~96.

\bibitem{rico1997origins}
{\sc R.~Rico-Hesse, L.~Harrison, R.~Salas, D.~Tovar, A.~Nisalak, C.~Ramos,
  J.~Boshell, M.~de~Mesa, R.~Nogueira, and A.~Rosa}, {\em Origins of dengue
  type 2 viruses associated with increased pathogenicity in the americas},
  Virology, 230 (1997), pp.~244--251.

\bibitem{rodhain1997mosquito}
{\sc F.~Rodhain and L.~Rosen}, {\em {Mosquito vectors and dengue virus-vector
  relationships}}, Dengue and dengue hemorrhagic fever,  (1997), pp.~112--134.

\bibitem{rosen1983transovarial}
{\sc L.~Rosen, D.~Shroyer, R.~Tesh, J.~Freier, and J.~Lien}, {\em {Transovarial
  transmission of dengue viruses by mosquitoes: Aedes albopictus and Aedes
  aegypti}}, The American journal of tropical medicine and hygiene, 32 (1983),
  pp.~1108--1119.

\bibitem{Ross1911}
{\sc R.~Ross}, {\em The prevention of malaria}, John Murray, 1911.

\bibitem{shim2012differential}
{\sc E.~Shim, Z.~Feng, and C.~Castillo-Chavez}, {\em Differential impact of
  sickle cell trait on symptomatic and asymptomatic malaria}, Mathematical
  biosciences and engineering: MBE, 9 (2012), p.~877.

\bibitem{sittisombut1997possible}
{\sc N.~Sittisombut, A.~Sistayanarain, M.~Cardosa, M.~Salminen,
  S.~Damrongdachakul, S.~Kalayanarooj, S.~Rojanasuphot, J.~Supawadee, and
  N.~Maneekarn}, {\em {Possible occurrence of a genetic bottleneck in dengue
  serotype 2 viruses between the 1980 and 1987 epidemic seasons in Bangkok,
  Thailand}}, The American journal of tropical medicine and hygiene, 57 (1997),
  p.~100.

\bibitem{smith2012ross}
{\sc D.~L. Smith, K.~E. Battle, S.~I. Hay, C.~M. Barker, T.~W. Scott, and F.~E.
  McKenzie}, {\em Ross, macdonald, and a theory for the dynamics and control of
  mosquito-transmitted pathogens}, PLoS pathog, 8 (2012), p.~e1002588.

\bibitem{cite-SmitWalt95}
{\sc H.~Smith and P.~Waltman}, {\em The theory of the chemostat. Dynamics of
  microbial competition.}, Cambridge Studies in Mathematical Biology, Cambridge
  University Press, 1995.

\bibitem{sutherst2004global}
{\sc R.~W. Sutherst}, {\em Global change and human vulnerability to
  vector-borne diseases}, Clinical microbiology reviews, 17 (2004),
  pp.~136--173.

\bibitem{tewari2004dengue}
{\sc S.~Tewari, V.~Thenmozhi, C.~Katholi, R.~Manavalan, A.~Munirathinam, and
  A.~Gajanana}, {\em {Dengue vector prevalence and virus infection in a rural
  area in south India}}, Tropical Medicine and International Health, 9 (2004),
  pp.~499--507.

\bibitem{VddWat02}
{\sc P.~van~den Driessche and J.~Watmough}, {\em reproduction numbers and
  sub-threshold endemic equilibria for compartmental models of disease
  transmission}, Math. Biosci., 180 (2002), pp.~29--48.

\bibitem{WHO2009FactSheet}
{\sc WHO}, {\em Dengue and dengue hemorrhagic fever. fact sheet}, {World Health
  Organization},  (2009).

\bibitem{yang_follow_2011}
{\sc H.~M. Yang, M.~d. L. d.~G. Macoris, K.~C. Galvani, and M.~T.~M.
  Andrighetti}, {\em Follow up estimation of {Aedes} aegypti entomological
  parameters and mathematical modellings}, Biosystems, 103 (2011),
  pp.~360--371.

\bibitem{zhang2006structure}
{\sc C.~Zhang, M.~Mammen~Jr, P.~Chinnawirotpisan, C.~Klungthong, P.~Rodpradit,
  A.~Nisalak, D.~Vaughn, S.~Nimmannitya, S.~Kalayanarooj, and E.~Holmes}, {\em
  {Structure and age of genetic diversity of dengue virus type 2 in Thailand}},
  Journal of General Virology, 87 (2006), p.~873.

\bibitem{zhao2013dynamical}
{\sc X.-Q. Zhao}, {\em Dynamical systems in population biology}, Springer
  Science \& Business Media, 2013.

\end{thebibliography}

\appendix
\section{Appendix: The basic reproduction number}\label{R0}
Let $x=(E_{h,1},E_{h,2},E_{v,1},E_{v,2},I_{h,1},I_{h,2},I_{v,1},I_{v,2})$ and so the relevant $\mathcal F$ and $\mathcal V$ are
$$\mathcal F=\left(\begin{array}{c}
\frac{a_1\beta_{vh}p_{11}S_{h,1}I_{v,1}}{p_{11}N_{h,1}+p_{21}N_{h,2}}+\frac{a_2\beta_{vh}p_{12}S_{h,1}I_{v,2}}{p_{12}N_{h,1}+p_{22}N_{h,2}}\\
\frac{a_1\beta_{vh}p_{21}S_{h,2}I_{v,1}}{p_{11}N_{h,1}+p_{21}N_{h,2}}+\frac{a_2\beta_{vh}p_{22}S_{h,2}I_{v,2}}{p_{12}N_{h,1}+p_{22}N_{h,2}}\\
a_1\beta_{hv}S_{v,1}\frac{ p_{11}I_{h,1}+p_{21}I_{h,2}}{ p_{11}N_{h,1}+p_{21}N_{h,2}}\\
a_2\beta_{hv}S_{v,2}\frac{ p_{12}I_{h,1}+p_{22}I_{h,2}}{ p_{12}N_{h,1}+p_{22}N_{h,2}}\\
0\\
0\\
0\\
0
\end{array}\right)\quad\text{and}\quad \mathcal V=\left(\begin{array}{c}
-(\mu_h+\nu_h) E_{h,1}\\
-(\mu_h+\nu_h) E_{h,2}\\
-(\mu_v+\nu_v)E_{v,1}\\
-(\mu_v+\nu_v)E_{v,2}\\
\nu_hE_{h,1}-(\mu_h+\gamma_i) I_{h,1}\\
\nu_hE_{h,2}-(\mu_h+\gamma_i) I_{h,2}\\
\nu_vE_{h,1}-(1-q)\mu_vI_{v,1}\\
\nu_vE_{h,2}-(1-q)\mu_vI_{v,2}
\end{array}\right).
$$
Let  $F\equiv D\mathcal F$ and $V\equiv D\mathcal V$ evaluated at the DFE. We obtain,
{\tiny{
$$
F=\left(\begin{array}{ccccc}
0_{2,4} & \begin{array}{cc}0 & 0\\ 0 & 0\end{array}  &\begin{array}{cc} \frac{a_1\beta_{vh}p_{11}N_{h,1}}{ (1-p)\mu_v(p_{11}N_{h,1}+p_{21}N_{h,2})} & \frac{a_2\beta_{vh}p_{12}N_{h,1}}{(1-p)\mu_v( p_{12}N_{h,1}+p_{22}N_{h,2})}\\\frac{a_1\beta_{vh}p_{21}N_{h,2}}{ (1-p)\mu_v(p_{11}N_{h,1}+p_{21}N_{h,2})} & \frac{a_2\beta_{vh}p_{22}N_{h,2}}{ (1-p)\mu_v(p_{12}N_{h,1}+p_{22}N_{h,2}}\end{array}\\
0_{4,4} &  \begin{array}{cc}\frac{a_1\beta_{vh}p_{11}N_{v,1}}{ (\mu_h+\gamma_1)(p_{11}N_{h,1}+p_{21}N_{h,2})} & \frac{a_1\beta_{vh}p_{21}N_{v,1}}{ (\mu_h+\gamma_2)(p_{11}N_{h,1}+p_{21}N_{h,2})}\\ \frac{a_2\beta_{vh}p_{12}N_{v,2}}{ (\mu_v)(p_{12}N_{h,1}+p_{22}N_{h,2})} & \frac{a_2\beta_{vh}p_{22}N_{v,2}}{ (\mu_h+\gamma_2)(p_{21}N_{h,1}+p_{22}N_{h,2}})\end{array} &  0_{2,2} \\
0_{4,4}  & 0_{4,2} & 0_{4,2}   \\
\end{array}\right)
$$ }} 
and
{\tiny{
$$
V=\left(\begin{array}{cccccccc}
-\mu_h-\nu_h &0 & 0 & 0 & 0  & 0  &0 & 0\\ 
0 &-\mu_h-\nu_h & 0 & 0 & 0  & 0  &0 & 0\\ 
0 &0 & -\mu_v-\nu_v & 0 & 0  & 0  &0 & 0\\ 
0 &0 & 0 & -\mu_v-\nu_v& 0  & 0  &0 & 0\\ 
\nu_v&0 & 0 & 0 & -\mu_h-\gamma_1  & 0  &0 & 0\\ 
0 &\nu_v& 0 & 0 & 0  & \mu_h-\gamma_h  &0 & 0\\  
0 &0 & \nu_v& 0 & 0  & 0  & -(1-p)\mu_v & 0\\
0 &0 & 0 & \nu_v & 0  & 0  &0 & -(1-p)\mu_v
\end{array}\right).
$$ }}

The basic reproduction number is the spectral radius of the matrix,
{\tiny{
$$
-FV^{-1}=\left(\begin{array}{cccccc}
0_{2,2} & M_{vh} & \begin{array}{cc}0 & 0\\ 0 & 0\end{array}  &\begin{array}{cc} \frac{a_1\beta_{vh}p_{11}N_{h,1}}{ (1-q)\mu_v(p_{11}N_{h,1}+p_{21}N_{h,2})} & \frac{a_2\beta_{vh}p_{12}N_{h,1}}{(1-q)\mu_v( p_{12}N_{h,1}+p_{22}N_{h,2})}\\\frac{a_1\beta_{vh}p_{21}N_{h,2}}{ (1-q)\mu_v(p_{11}N_{h,1}+p_{21}N_{h,2})} & \frac{a_2\beta_{vh}p_{22}N_{h,2}}{ (1-q)\mu_v(p_{12}N_{h,1}+p_{22}N_{h,2}}\end{array}\\
M_{hv} & 0_{2,2} &  \begin{array}{cc}\frac{a_1\beta_{vh}p_{11} N_{v,1}}{ (\mu_h+\gamma_1)(p_{11}N_{h,1}+p_{21}N_{h,2})} & \frac{a_1\beta_{vh}p_{21} N_{v,1}}{ (\mu_h+\gamma_2)(p_{11}N_{h,1}+p_{21}N_{h,2})}\\ \frac{a_2\beta_{vh}p_{12}N_{v,2}}{ (\mu_v)(p_{12}N_{h,1}+p_{22}N_{h,2})} & \frac{a_2\beta_{vh}p_{22}N_{v,2}}{ (\mu_h+\gamma_2)(p_{21}N_{h,1}+p_{22}N_{h,2}})\end{array} &  0_{2,2} \\
0_{4,2}  & 0_{4,2} & 0_{4,2}  & 0_{4,2} \\
\end{array}\right)
$$ }} 
where 
$$
   M_{vh}= \left(    \begin{array}{cc}\frac{a_1\beta_{vh}p_{11} N_{h,1}\nu_v}{(p_{11}N_{h,1}+p_{21}N_{h,2})(\mu_v+\nu_v)(1-q)\mu_v} & \frac{a_2\beta_{vh}p_{12}N_{h,1}\nu_v}{(p_{12}N_{h,1}+p_{22}N_{h,2}) (\mu_v+\nu_v)(1-q)\mu_v}\\ \frac{a_1\beta_{vh}p_{21}N_{h,2}\nu_v}{ (p_{11}N_{h,1}+p_{21}N_{h,2})(\mu_v+\nu_v)(1-q)\mu_v} & \frac{a_2\beta_{vh}p_{22}N_{h,2}\nu_v}{ (p_{12}N_{h,1}+p_{22}N_{h,2})(\mu_v+\nu_v)(1-q)\mu_v}\end{array}\right)
$$
and 
$$
   M_{hv}= \left(    \begin{array}{cc}\frac{a_1\beta_{hv}p_{11}  N_{v,1}\nu_h}{(p_{11}N_{h,1}+p_{21}N_{h,2})(\mu_h+\nu_h)(\mu_h+\gamma_1)} & \frac{a_1\beta_{hv}p_{21} N_{v,1}\nu_h}{(p_{11}N_{h,1}+p_{21}N_{h,2}) (\mu_h+\nu_h)(\mu_h+\gamma_2)}\\
   \frac{a_2\beta_{hv}p_{12} N_{v,2}\nu_h}{ (p_{12}N_{h,1}+p_{22}N_{h,2})(\mu_h+\nu_h)(\mu_h+\gamma_1)} & \frac{a_2\beta_{hv}p_{22} N_{v,2}\nu_h}{ (p_{12}N_{h,1}+p_{22}N_{h,2})(\mu_h+\nu_h)(\mu_h+\gamma_2)}
   \end{array}\right).
$$
The basic reproduction number $\mathcal R_0^2$ is defined by the expression, $$\mathcal R_0^2=\rho(M_{vh}M_{hv}).$$
\end{document}